\documentclass[12pt]{article}
\usepackage[authoryear,round]{natbib}
\usepackage{amssymb}
\usepackage{graphicx}
\usepackage{latexsym}
\usepackage{url}
\usepackage[left=1in,top=1in,right=1in,bottom=1in]{geometry}


\usepackage{xcolor}
\definecolor{Bleu}{RGB}{0,0,204}
\definecolor{Violet}{RGB}{102,0,204}
\definecolor{Rouge}{RGB}{204,0,0}
\definecolor{Highlight}{RGB}{251,0,0}
\usepackage[nokeyprefix]{refstyle}
\usepackage{varioref}

\usepackage{amsmath}
\usepackage{amsthm,thmtools}
\usepackage{booktabs}
\usepackage{color}
\usepackage{tikz}
\usepackage{rotating}

\usepackage[compact]{titlesec}
\titlespacing*{\section}
{0pt}{6pt plus 1pt minus 1pt}{0pt plus 0pt minus 0pt}
\titlespacing*{\subsection}
{0pt}{6pt plus 1pt minus 1pt}{1pt plus 1pt minus 1pt}

\usepackage{enumitem}
\setenumerate{itemsep=0pt,topsep=6pt}
\makeatletter
\newcommand{\mylabel}[2]{
	\addtocounter{\@listctr}{-1}%
    \refstepcounter{\@listctr}%
	#2\def\@currentlabel{#2}\label{#1}}
\makeatother

\usepackage{algorithm}
\usepackage{algorithmicx}
\usepackage[noend]{algpseudocode}
\usepackage{setspace}\doublespacing
\usepackage{etoolbox} 
\AtBeginEnvironment{algorithm}{\setstretch{1.35}} 

\setlength{\intextsep}{1\baselineskip}

\usepackage{mathrsfs} 

\usepackage{dsfont} 

\usepackage{accents}

\raggedbottom

\usetikzlibrary{shapes}
\usetikzlibrary{arrows}
\definecolor{darkblue}{rgb}{0,0.4,0.9}
\definecolor{gray10}{rgb}{0.1,0.1,0.1}
\definecolor{gray20}{rgb}{0.2,0.2,0.2}
\definecolor{gray30}{rgb}{0.3,0.3,0.3}
\definecolor{gray40}{rgb}{0.4,0.4,0.4}
\definecolor{gray60}{rgb}{0.6,0.6,0.6}
\definecolor{gray80}{rgb}{0.8,0.8,0.8}
\definecolor{gray90}{rgb}{0.9,0.9,.9}
\definecolor{gray95}{rgb}{0.95,0.95,.95}
\definecolor{gray96}{rgb}{0.96,0.96,.96}
\definecolor{lgreen} {RGB}{180,210,100}
\definecolor{dblue}  {RGB}{20,66,129}
\definecolor{ddblue} {RGB}{11,36,69}
\definecolor{lred}   {RGB}{220,0,0}
\definecolor{nred}   {RGB}{224,0,0}
\definecolor{norange}{RGB}{230,120,20}
\definecolor{nyellow}{RGB}{255,221,0}
\definecolor{ngreen} {RGB}{98,158,31}
\definecolor{dgreen} {RGB}{78,138,21}
\definecolor{nblue}  {RGB}{28,130,185}
\definecolor{jblue}  {RGB}{20,50,100}
\definecolor{nnyellow}{RGB}{235,200,0}
\definecolor{purple}{RGB}{150, 0, 120}
\definecolor{sgGreen} {RGB}{20, 180, 50}
\definecolor{revised}{rgb}{0,0,0.9}

\newtheoremstyle{mystyle}
  {6pt plus 0pt minus 0pt} 
  {6pt plus 0pt minus 0pt} 
  {} 
  {} 
  {\sc} 
  {.} 
  {.5em} 
  {} 
\theoremstyle{mystyle}
\newtheorem{definition}{Definition}

\newtheorem{theorem}{Theorem}

\newtheorem{lemma}[theorem]{Lemma}

\declaretheorem[name=Remark,qed={$\Box$}]{remark} 

\newcommand{\openr}{\hbox{${\rm I\kern-.2em R}$}}
\newcommand{\openn}{\hbox{${\rm I\kern-.2em N}$}}
\newcommand{\logit}{\operatorname{logit}}

\newcommand{\Rem}{\operatorname{Rem}}

\newcommand{\norm}[1]{\left\lVert#1\right\rVert}


\newcommand{\argmin}{\operatorname{argmin}}
\newcommand{\E}{\mathbb{E}}

\newcommand{\Ind}{\mathds{1}}
\newcommand{\IF}{\operatorname{IF}}
\newcommand{\ER}{\mathcal{E}}
\newcommand{\Normal}{\operatorname{Normal}}
\newcommand{\Bern}{\operatorname{Bernoulli}}

\newcommand{\phR}{Z}

 \setlength{\abovedisplayskip}{1pt}
 \setlength{\topsep}{0pt}
 \setlength{\partopsep}{0pt}

\makeatletter
\newlength{\trianglerightwidth}
\settowidth{\trianglerightwidth}{$\triangleright$~}
\algnewcommand{\LineCommentCont}[1]{\Statex \hskip\ALG@thistlm%
  \parbox[t]{\dimexpr\linewidth-\ALG@thistlm}{\hangindent=\trianglerightwidth \hangafter=1 \strut$\triangleright$ #1\strut}}
\makeatother

\allowdisplaybreaks

\bibliographystyle{plainnat}

\usepackage[normalem]{ulem}
\usepackage{longtable}
\setlength{\marginparwidth}{3.25cm}
\usepackage{authblk}

\title{Sequential Double Robustness in Right-Censored Longitudinal Models \\[1em]
{\normalsize Version 2}\\\vspace{-1em}
{\normalsize Version 1: May 6, 2017}}
\author[1,2]{Alexander R.  Luedtke}

\affil[1]{\footnotesize   Vaccine  and   Infectious  Disease   Division,  Fred
  Hutchinson Cancer Research Center, USA}
\affil[2]{\footnotesize   Public Health Sciences  Division,  Fred
  Hutchinson Cancer Research Center, USA} 

\author[3]{Oleg Sofrygin}
\affil[3]{\footnotesize   Division of Biostatistics, University of California, Berkeley, USA} 

\author[3]{Mark J. van der Laan}

\author[4,1]{Marco Carone}
\affil[4]{\footnotesize   Department of Biostatistics, University of Washington, USA}

 \begin{document}
\maketitle
\begin{abstract}\singlespacing
Consider estimating the G-formula for the counterfactual mean outcome under a given treatment regime in a
longitudinal study. \cite{Bang&Robins05}
provided an estimator for this parameter based on a sequential regression formulation. This approach is doubly robust in that
it is consistent if either all outcome regressions or all treatment mechanisms are consistently
estimated. 
We define a stronger notion of double robustness, termed sequential double robustness, for estimators of the longitudinal G-formula. The definition emerges naturally from a more general definition of sequential double robustness for outcome regression estimators. An outcome regression estimator is sequentially doubly robust (SDR) if, at each subsequent time point, either the outcome regression or the treatment mechanism is consistently estimated. This form of robustness is exactly what one would anticipate is attainable by studying the remainder term of a first-order expansion of the G-formula parameter. We introduce a novel SDR estimator, whose development involves a novel infinite-dimensional extension of targeted minimum loss-based estimation. These new developments have the potential to dramatically improve both the robustness of estimators of the marginal G-formula and the rate of convergence of outcome regression estimators, with improved rates translating into dramatically improved finite sample behavior.\vspace{.2in}
\end{abstract}


\section{Introduction}\label{sec:intro}
Consider a longitudinal study, where for each individual in the study we have observed time-varying covariates and treatment indicator, and we also observe an outcome at the end of the study. Suppose the goal is to estimate the G-formula that, under the consistency and sequential randomization assumptions \citep{Robins1986}, identifies the mean of the counterfactual outcome had each participant received treatment at every time point. Estimating the end-of-study mean outcome in a study with dropout can be equivalently described in this manner. Throughout, we refer to the probability of receiving treatment at time $t$, conditional on the observed past, as the time $t$ treatment mechanism, and the G-formula identified with the mean counterfactual outcome, conditional on the observed past before time $t$ treatment, under receiving treatment at each time point at or beyond $t$ as the time $t$ outcome regression.

In the last several decades, there has been extensive work on estimating causal parameters from right-censored longitudinal data structures. \cite{Tsiatisetal2011} and \cite{Rotnitzkyetal12} provide comprehensive reviews -- here, we provide only an abbreviated version. Early approaches for G-formula estimation include inverse probability weighted methods \citep[e.g.,][]{Robins1993,Robinsetal2000} and structural nested mean/G-estimation methods \citep[e.g.,][]{Robins89,Robins94}. These approaches are consistent when the treatment mechanisms or the outcome regressions, respectively, are consistently estimated, and they have normal limits when parametric models have been specified for these components of the observed data distribution. More recently, there has been extensive development of doubly robust (DR) methods, which are consistent if either the treatment mechanism at each time point or the outcome regression at each time point is consistently estimated, and have normal limits under additional conditions. \cite{Robins1999b} introduced a sequential methodology for estimating this quantity, which represents an extension of the single time point methodology in \cite{Scharfstein&Rotnitzky&Robins99}. \cite{Bang&Robins05} introduced a simplification of the approach based on sequential estimation of regressions conditional on the past history. \Citet{vanderLaan&Gruber12} extended this procedure to allow for data-adaptive estimation. \cite{Rotnitzkyetal12} adapted it to allow for oracle-type model selection that achieves the optimal efficiency within a prespecified class when the treatment mechanism is correctly specified by a parametric model. \cite{Tsiatisetal2011} describe an approach to exploit a correctly specified parametric model for the treatment/missingness mechanism, though using estimating equations rather than a sequential regression procedure. \cite{Seaman&Copas2009} describe a DR generalized estimating equation methodology for longitudinal data structures.

In this work, we describe a different form of robustness, which we term sequential double robustness. An estimator of the G-formula parameter is sequentially doubly robust (SDR) if it is consistent provided that, at each time point, either the outcome regression or the treatment mechanism is consistently estimated. This property is stronger than the traditional definition of double robustness, which either requires consistency of all of the outcome regression estimators or all of the treatment mechanism estimators. We note that sequential double robustness represents a special case of $2^K$ multiple robustness, where $K$ is the number of time points \citep{Vansteelandtetal2007}. We find the name sequential double robustness appealing because it stressed the sense in which this robustness is an extension of existing double robustness: sequentially across time points. In the case $K=2$, \cite{Tchetgen2009} exhibited an SDR estimator of the G-formula based on parametric modeling of the outcome regressions and treatment mechanisms. They also outlined an extension to the case $K>2$. In this work, we consider methods allowing for data-adaptive estimators, formulate the SDR estimators of arbitrary outcome regressions, present a plug-in estimator of the outcome regression (rather than an estimating equation-based approach), and rigorously prove the SDR property for these data adaptive estimators -- our results are thus more general than the important early work of \cite{Tchetgen2009}. We show that an existing estimator achieves this property and propose a new estimator that is also guaranteed to respect known bounds on the outcome. The development of this estimator involves translating ideas from targeted minimum loss-based estimation (TMLE) into estimation of infinite-dimensional parameters that are not pathwise differentiable and for which square root of sample size convergence rates are not typically possible. This extension is distinct from the recent work of \cite{vdL&Gruber2016}, which describes a TMLE for infinite-dimensional parameters for which each component is pathwise differentiable. We instead draw inspiration from the recent work of \cite{Kennedyetal2016}, in which an infinite-dimensional targeting step based on locally linear regression is implemented in a continuous point treatment setting.

Other authors have also recently become interested in SDR estimators. Contemporaneously to the initial technical report of this work \citep{Luedtkeetal2017}, \citeauthor{Rotnitzkyetal2017} posted a technical report with related results \citep{Rotnitzkyetal2017} that were obtained independently. Both of these works study the behavior of flexible algorithms for estimating the G-formula. Although both \cite{Rotnitzkyetal2017} and the present work study SDR estimators, they have a significantly different focus. \cite{Rotnitzkyetal2017} focus on an estimating equation approach. While we discuss this approach through the concept of ``doubly robust unbiased transformations'' in Section~\ref{sec:existing}, our main focus is on a TMLE extension, which yields a plug-in estimator of the outcome regressions and the corresponding G-formula. 
Furthermore, in the rigorous study of \cite{Rotnitzkyetal2017}, the allowed flexible learning algorithms must be linear operators. This work, on the other hand, does not require linearity, but instead focuses on empirical risk minimizers over generic Donsker classes. The recent work of \cite{Molinaetal2017} studies multiple robustness in general models in which the likelihood factorizes as a product of variation-independent quantities, and refers to the SDR estimator of the G-formula presented in \cite{Tchetgen2009}. Unlike the setting of our work, \cite{Molinaetal2017} focuses on settings where certain components of the likelihood are indexed by finite-dimensional nuisance parameters.

This paper is organized as follows. In Section~\ref{sec:obj}, the parameter(s) of interest are introduced and  the notion of sequential double robustness is formalized. A variation-independent, but more restrictive, formulation of sequential double robustness is presented in Appendix~\ref{app:varindep}. An analysis of the SDR properties of certain existing data-adaptive outcome regression estimators is given in Section~\ref{sec:existing}. A general template for constructing an SDR estimator is given in Section~\ref{sec:generaltemplate}. In Section~\ref{sec:newproc}, our new SDR procedure, which we refer to as an infinite-dimensional targeted minimum loss-based estimator (iTMLE), is presented. Formal properties of the empirical risk minimization (ERM) variant of our procedure are given in Section~\ref{sec:erm}. In practice, ERMs may be prone to overfitting when used with our procedure -- in Appendix~\ref{app:cv}, we present a variant that relies on cross-validation to mitigate this problem. We recommend this variant for use in practice. Though more notationally burdensome, proofs for the cross-validated procedure and the basic ERM approach are nearly identical and are thus omitted. Simulation results are presented in Section~\ref{sec:sim}, and a discussion is provided in Section~\ref{sec:disc}. 
All proofs not provided in the main text can be found in Appendix~\ref{app:proofs}.

\section{Notation, definitions and objective}\label{sec:obj}
\noindent\textit{Parameter(s) of interest.} Let $O=(L_0,A_0,L_1,\ldots,A_{K},L_{K+1})$ be the observed longitudinal  data unit, where indices denote time (e.g., visit number, time periods elapsed), $L_t$ is a vector of covariates recorded at time $t$, $A_t$ is the treatment node value (or indicator of being under study) at time $t$, and $L_{K+1}=Y$, the outcome at time-point $K+1$, a fixed time of interest. For each $t$, denote by $\bar{H}_t=(L_0,A_0,L_1,\ldots,L_t)$ the history recorded right before $A_t$ is determined. In particular, $O=\bar{H}_{K+1}$. For simplicity of exposition, suppose that outcome $L_{K+1}$ is bounded in the unit interval. Let $\Psi:x\mapsto (1+e^{-x})^{-1}$ be the expit function and $\Psi^{-1}:x\mapsto\log[x/(1-x)]$ the logit function.

Suppose we observe an i.i.d. sample $O_1,O_2,\ldots,O_n$ drawn from a distribution $P$ belonging to some model $\mathcal{M}$. 
Throughout, we will refer to an arbitrary element in $\mathcal{M}$ by $P'$. 
We will use $\E$ and $\E'$ to denote expectations under $P$ and $P'$, respectively. For each $t=1,\ldots,K$, we define the treatment mechanism as $\pi_t: \bar{h}_t\mapsto P(A_t=1|\bar{H}_t=\bar{h}_t)$, and use $\pi_{t,i}$ and $\hat{\pi}_{t,i}$ as shorthand notation for $\pi_t(\bar{H}_{t,i})$ and $\hat{\pi}_t(\bar{H}_{t,i})$, respectively, with $\hat{\pi}_t$ a given estimator of $\pi_t$. Throughout, we make the strong positivity assumption: there exists a $\delta>0$ so that $P\{\pi_t(\bar{H}_t)>\delta\}=1$ for each $t=1,\ldots,K$.

We use many recursions over $t=K,\ldots,0$ in this work. When $t=0$ this requires some conventions: $\sum_{s=1}^{0}\ldots = 0$ (sums are zero); $\prod_{s=1}^{0} \ldots = 1$ (products are one); $\bar{H}_{0}=\emptyset$ (the time $0$ covariate is empty); $A_{0}=1$ and $\pi_t(\bar{h}_0)= 1$ (the time $0$ intervention is always $1$); and, for $f : \bar{\mathcal{H}}_{0}\rightarrow\mathbb{R}$, $f(\bar{H}_{0})=f$ (functions applied to $\emptyset$ can also be written as constants).

For a history vector $\bar{h}_{K+1}$, set $Q_{K+1}:\bar{h}_{K+1}\mapsto \ell_{K+1}$. For $t=K,\ldots,1$, recursively define
\begin{align*}
Q_t:\bar{h}_t\mapsto \E[Q_{t+1}\left(\bar{H}_{t+1}\right)|\bar{h}_t,A_t=1],
\end{align*}
and define $Q_0\equiv \E[Q_1(\bar{H}_1)]$. For ease of notation, we write $Q_{t,i}\equiv Q_t\left(\bar{H}_{t,i}\right)$. For an estimate $\hat{Q}_t$, we similarly write $\hat{Q}_{t,i}$. For $t\ge 1$, our objective will be to estimate $Q_t$ as well as possible in terms of some user-specified criterion. We focus on the mean-squared error criterion in this work. For $t=0$, our objective will be to obtain a consistent estimate of $Q_t$ for which there exist reasonable conditions for its asymptotically linearity.

\vspace{6pt}\noindent\textit{(Sequential) double robustness.}
We will make use of the following non-technical conditions, defined for each $t=0,\ldots,K$.
\begin{enumerate}[leftmargin=*,widest={OR.$t$}]
	\item[\mylabel{it:or}{OR.$t$)}]  The functional form of the outcome regression at time $t$, i.e. $Q_t$, is correctly specified by the estimation procedure, or at least arbitrarily well approximated asymptotically.
	\item[\mylabel{it:tm}{TM.$t$)}]  The treatment mechanism at time point $t$, i.e. $\pi_t$, is consistently estimated.
\end{enumerate}
We use these conditions informally to discuss properties of existing estimator and our new estimator until Section~\ref{sec:newproc}, where we begin to present formal conditions for the validity of our estimator. Note that TM.$t$ requires consistent estimation, and OR.$t$ requires correct specification. This discrepancy occurs because we will use OR.$t$ as a part of a sufficient condition for consistent estimation of $Q_t$. 
For estimation of $Q_0$, double robustness is defined as follows \citep{vdL02,Bang&Robins05,Tsiatisetal2011}:
\begin{definition}\label{def:DRQm1}
A \textit{doubly robust} estimator of $Q_0$ is consistent if either (i) OR.$t$ holds for all $t=1,\ldots,K$ or (ii) TM.$t$ holds for all $t=1,\ldots,K$.
\end{definition}
These estimators are referred to as DR because there are two possibilities for obtaining consistent estimation. In this work, we define a more general form of robustness.
\begin{definition}\label{def:SDRQm1}
A \textit{sequentially doubly robust} estimator of $Q_0$ is an estimator that is consistent if, for each $t=1,\ldots,K$, either OR.$t$ or TM.$t$ holds.
\end{definition}
An SDR estimator is double robust, but the converse need not hold.  This estimator could be referred to as $2^K$-robust \citep{Vansteelandtetal2007} in the sense that there are $2^K$ ways that the estimation can satisfy one and only one of OR.$t$ and TM.$t$, $t=1,\ldots,K$, though of course certain of these possibilities may be less likely than others. 
Appendix~\ref{app:varindep} gives a variation-independent, but more restrictive, formulation of the OR.$t$ conditions. We also define sequential double robustness for the outcome regressions.
\begin{definition}\label{def:SDRQk}
A \textit{sequentially doubly robust} estimator of $Q_t$ is an estimator that is consistent if OR.$t$ holds and, for each $s>t$, either OR.$s$ or TM.$s$ holds.
\end{definition}
Because OR.$t$ is a triviality when $t=0$ (the functional form is correctly specified by a constant), the definition of sequential double robustness given specifically for $Q_0$ is a special case of this definition. 
The objective of this work will be to present sequentially doubly robust estimators of $Q_t$, $t=0,\ldots,K$. We will also outline arguments showing that sequentially doubly robust estimators should have faster rates of convergence than sequential regression estimators, and therefore in finite samples are expected to give more precise estimates of the outcome regressions, e.g. the baseline-covariate-conditional mean counterfactual risk under treatment at all time points. Our estimator of $Q_0$ will also be efficient among all regular and asymptotically linear estimators under some additional conditions.

\section{Detailed overview of related estimation procedures}\label{sec:existing}
In the introduction, we gave a broad overview of the literature for estimating mean outcomes from monotonely coarsened data structures. We now describe several semi- or nonparametric methods from this literature. One could also study parametric methods, incorporating basis function transformations of the covariates of increasing dimension to allow for increasingly flexible estimation of the outcome regressions. We do not consider such approaches here.

First, we describe a method that uses DR unbiased transformations of the data, i.e. distribution dependent pseudo-outcomes with conditional expectation equal to the parameter of interest given correct specification of an outcome regression or treatment mechanism. Variants of these unbiased transformations were given in \cite{Rubin&vanderLaan2007}, which represent a DR extension of the unbiased transformations presented earlier in the literature \citep[see, e.g.,][]{Buckley&James1979,Koul&Susarla&vanRyzin81}. 
We describe an SDR implementation of this unbiased transform estimator, also discussed in \cite{Rotnitzkyetal2017}, and we also discuss its shortcomings. We then describe inverse probability weighted (IPW) loss functions as presented in \cite{vanderLaan&Dudoit03}. 
Finally, we discuss sequential regression procedures in the vein of \cite{Bang&Robins05}.

\vspace{6pt}\noindent\textit{Doubly robust unbiased transformations.} An unbiased transform for $Q_t$ is a distribution-dependent mapping $\Gamma_t^P : \bar{\mathcal{H}}_{K+1}\rightarrow\mathbb{R}$ such that $\Gamma_t^P(\bar{H}_{K+1})$ has mean $Q_t(\bar{H}_t)$ when $\bar{H}_{K+1}$ is drawn from the conditional distribution of $P$ given that $(\bar{H}_t,A_t)=(\bar{h}_t,1)$. Early work on these transformations used imputation-based approaches \citep[e.g.,][]{Buckley&James1979}, whose consistency relies on consistently estimating $Q_{s}$, $s>t$, or on IPW approaches \citep[e.g.,][]{Koul&Susarla&vanRyzin81}, whose consistency relies on consistently estimating $\pi_t$, $s>t$. \cite{Rubin&vanderLaan2007} presented an AIPW unbiased transformation, of which a special case was presented in \cite{Rotnitzkyetal2006}. For our problem, one could estimate $Q_t$ using the DR transform
\begin{align*}
\Gamma_{t,i}&\equiv \sum_{s=t+1}^K \left(\prod_{r=t+1}^{s} \frac{A_{r,i}}{\pi_{r,i}}\right)\left\{Q_{s+1,i} - Q_{s,i}\right\} + Q_{t+1,i},
\end{align*}
regressing $\Gamma_{t,i}$ against $\bar{H}_{t,i}$ for all subjects $i$ with $A_{t,i}=1$. In practice, the transformation $\widehat{\Gamma}_{t,i}$ 
is used, where each instance of $Q_{s}$ and $\pi_{s}$, $s>t$, is replaced by estimates $\hat{Q}_{s,i}$, $\hat{\pi}_{s,i}$. Consider the procedure that regresses $\widehat{\Gamma}_{t,i}$ against $\bar{H}_{t,i}$ for all $i$ such that $A_{t,i}=1$. This procedure is DR in the sense that it is consistent if OR.$t$ holds and either (i) OR.$s$ holds for all $s>t$ or (ii) TM.$s$ holds for all $s>t$. If applied iteratively as in Algorithm~\ref{alg:druntrnas}, i.e. if $\hat{Q}_K$ is estimated via this procedure, then $\hat{Q}_{K-1}$ using as initial estimate $\hat{Q}_K$, then $\hat{Q}_{K-2}$ using as initial estimates $\hat{Q}_{K-1}$ and $\hat{Q}_K$, etc., then this procedure is SDR in the sense of Definition~\ref{def:SDRQk} \citep{Luedtkeetal2017,Rotnitzkyetal2017}.

The advantage of this procedure is that it is easy to implement: indeed, once one has the transformation $\widehat{\Gamma}_t$, one can simply plug the observations $\{\widehat{\Gamma}_{t,i},\,\bar{H}_{t,i} : i\textnormal{ satisfies }A_{t,i}=1\}$ into their preferred regression tool. The disadvantage of this procedure is that the transformations do not necessarily obey the bounds of the original outcome: indeed, $L_{K+1}$ may be bounded in $[0,1]$, but $\widehat{\Gamma}_{t,i}$ can be very large in absolute value if $\prod_{s>t}\hat{\pi}_{s,i}$ gets close to zero. While one could theoretically constrain the estimated regression function to respect the $[0,1]$ bounds, few existing regression software packages allow for such constraints on the model. The stability of such a procedure has also not been evaluated when there are near-positivity violations, i.e. $\prod_{t=1}^{K} \hat{\pi}_{t}(\bar{H}_t)$ is near zero. We evaluate this procedure in Section~\ref{sec:sim}.

\vspace{2pt}{\centering
[Algorithm~\ref{alg:druntrnas} about here.]\par
}\vspace{2pt}

\vspace{6pt}\noindent\textit{(A)IPW loss functions.} An alternative approach uses the IPW loss function of \cite{vanderLaan&Dudoit03}. Before describing this approach, we introduce the notion of a loss function. Suppose that one wishes to estimate a feature $f_0$ of a distribution $\nu$. For example, $\nu$ may be the distribution of a predictor-outcome pair $Z\equiv (X,Y)$, and $f_0$ may be the conditional expectation function $x\mapsto \E_{\nu}[Y|x]$. For each $f$, let $z\mapsto \mathscr{L}(z;f)$ denote a real-valued function. We call $\mathscr{L}$ a loss function if $f_0=\argmin_f \E_{\nu}[\mathscr{L}(Z;f)]$ over an appropriate index set for $f$. The quantity $\E_{\nu}[\mathscr{L}(Z;f)]$ is referred to as the risk of $f$. Examples of loss functions for the conditional mean $f_0$ include the squared-error loss $(z;f)\mapsto (y-f(x))^2$ and, for $z$ bounded in $[0,1]$, the cross-entropy loss $(z;f)\mapsto -[y\log f(x) + (1-y)\log \{1-f(x)\}]$.

We now present the IPW loss. For simplicity, we focus on the IPW squared-error loss, given by $\mathscr{L}_t(\bar{h}_{K+1};Q_{t}')\equiv \left(\prod_{s=t+1}^K \frac{a_{s}}{\pi_{s}(\bar{h}_{s})}\right)\{\ell_{K+1} - Q_{t}'(\bar{h}_t)\}^2$. The standard change of measure argument associated with IPW estimators shows that $Q_t$ equals the argmin over $Q_{t}'$ of $\E\left[\mathscr{L}_t(\bar{H}_{K+1};Q_{t}')\middle|A_t=1\right]$, where the $Q_{t}'$-specific expectation is referred to as the risk of $Q_{t}'$. This suggests that one can estimate $Q_t$ by minimizing the empirical risk conditional on $A_t=1$, i.e. the empirical mean of $\mathscr{L}_t(\bar{H}_{K+1,i};Q_{t}')$ among subjects with $A_{t,i}=1$. 
In practice we may not know $\pi_{s}$, so we replace them with estimates and denote the corresponding loss by $\widehat{\mathscr{L}}_{s}$. 
This procedure yields a consistent estimate of $Q_t$ if each $\hat{\pi}_{s}$ is consistently estimated and the regression is correctly specified. There also exists an AIPW version of this loss function \citep{vanderLaan&Dudoit03}, though this loss does not appear to easily yield an SDR procedure \citep[see][]{Luedtkeetal2017}.

\vspace{6pt}\noindent\textit{Sequential regression.} \cite{Bang&Robins05} proposed a procedure that takes advantage of the recursive definition of these $Q_t$ functions. They aimed to estimate $Q_0$. An instance of their procedure for binary outcomes is displayed in Algorithm~\ref{alg:bangrobins}. In short, it first correctly specifies $\hat{Q}_{K+1}(\bar{H}_{K+1})\equiv L_{K+1}$. Now, iteratively from $t=K$ to $t=0$, it uses observations $i$ satisfying $\prod_{s=1}^t A_{s,i}=1$ to regress $\hat{Q}_{t+1,i}$ against $\bar{H}_{t,i}$, using a parametric fit and the logit link function. Each parametric fit includes a linear term with covariate $\frac{1}{\prod_{s=1}^{t}\hat{\pi}_{s,i}}$. These linear terms were added to ensure that $\hat{Q}_0$ solves the efficient estimating equation, i.e. that
\begin{align*}
\sum_{i=1}^n\sum_{t=0}^{K} \left[\left\{\prod_{s=1}^{t}\frac{A_{s,i}}{\hat{\pi}_{s,i}}\right\}\left\{\hat{Q}_{t+1,i}-\hat{Q}_{t,i}\right\}\right] = 0.
\end{align*}
In particular, the fitting of $\hat{\epsilon}_t$ ensures that each $t$-specific term in the above sum equals zero. \Citet{vanderLaan&Gruber12} extended this procedure to allow $\hat{Q}_t$ to be estimated data adaptively. They refer to their estimator as a longitudinal TMLE (LTMLE). 

While both the \cite{Bang&Robins05} and \cite{vanderLaan&Gruber12} procedures are DR for $\hat{Q}_0$, neither is SDR for $\hat{Q}_0$. Furthermore, neither is SDR for $Q_t$, $t\ge 1$, in the sense of Definition~\ref{def:SDRQk}. In particular, when $t\ge 1$ consistent estimation of $Q_t$ from these procedures relies on OR.$s$, $s\ge t$. These procedures can be consistent when only some outcome regressions and some treatment mechanisms are consistently estimated \citep{Molinaetal2017,Luedtkeetal2017}. In particular, if there exists some $t$ such that OR.$s$ holds for all $s\ge t$ and TM.$s$ holds for all $s<t$, then these estimators will be consistent for $Q_0$.

\vspace{2pt}{\centering
[Algorithm~\ref{alg:bangrobins} about here.]\par
}\vspace{2pt}

\section{General template for achieving sequential double robustness}\label{sec:generaltemplate}
We now give a general template for achieving sequential double robustness. This template hinges on a straightforward induction argument, where we show that achieving an SDR estimator at time $t+1$ yields an SDR estimator at time $t$. In this section, we let $\hat{Q}_t$ and $\hat{\pi}_t$ respectively denote generic estimates of the outcome regression and treatment mechanism at time $t$. We will often make use of the following strong positivity assumption on our treatment mechanism estimates. Though we introduced this condition earlier, we name it here for clarity. This condition can be enforced in the estimation procedure via truncation.
\begin{enumerate}[leftmargin=*,widest={SP.$t$}]
	\item[\mylabel{it:sp}{SP.$t$)}] There exists a $\delta>0$ such that, for each $s>t$, $P\{\hat{\pi}_{s}(\bar{H}_{s})>\delta\}=0$.
\end{enumerate}
To ease notation, when $Q_{s}$ or $\pi_{s}$, or an estimate thereof, fall within an expectation, we often omit the dependence on $\bar{H}_{s}$. For a real-valued function $\bar{h}_{s}\mapsto f(\bar{h}_{s})$, we denote the $L^2(P)$ norm by $\norm{f}=\E[f(\bar{H}_{s})^2]^{1/2}$. We define the following useful objects:
\begin{align}
&\textnormal{D}_{s}^t(\hat{Q}_{s+1},\hat{Q}_{s})(\bar{h}_t)\equiv -\E\left[\left\{\frac{A_t}{\pi_t}\prod_{r=t+1}^{s} \frac{A_{r}}{\hat{\pi}_{r}}\right\}\left\{\hat{Q}_{s+1}-\hat{Q}_{s}\right\}\middle|\bar{h}_t\right],\;s\ge t \nonumber\\
&\Rem_{s}^t(\hat{Q}_{s})(\bar{h}_t)\equiv \E\left[\left\{\frac{A_t}{\pi_t}\prod_{r=t+1}^{s-1}\frac{A_{r}}{\hat{\pi}_{r}}\right\}\left(1-\frac{\pi_{s}}{\hat{\pi}_{s}}\right)\left\{\hat{Q}_{s}-Q_{s}\right\}\middle|\bar{h}_t\right],\;s>t. \label{eq:RemDdef}
\end{align}
Note that, for each $(\hat{Q}_{s+1},\hat{Q}_{s})$, $\textnormal{D}_{s}^t(\hat{Q}_{s+1},\hat{Q}_{s})$ is a function mapping a $\bar{h}_t$ to the real line, and similarly for $\Rem_{s}^t(\hat{Q}_{s})$. In the remainder of this section we study a particular estimator $(\hat{Q}_{s+1},\hat{Q}_{s})$, and so use the simpler notation $\textnormal{D}_{s}^t\equiv \textnormal{D}_{s}^t(\hat{Q}_{s+1},\hat{Q}_{s})$ and $\Rem_{s}^t\equiv \Rem_{s}^t(\hat{Q}_{s})$. We also define $\textnormal{D}^t(\bar{h}_t)\equiv \sum_{s=t}^{K} \textnormal{D}_{s}^t(\bar{h}_t)$.
\begin{lemma}[First-order expansion of $Q_t$] \label{lem:firstordexp}
If SP.$t$ holds, then, for $P$-almost all $\bar{h}_t\in\bar{\mathcal{H}}_t$,
\begin{align}
\hat{Q}_t(\bar{h}_t)-Q_t(\bar{h}_t)&= \textnormal{D}^t(\bar{h}_t)+\sum_{s=t}^{K} \Rem_{s}^t(\bar{h}_t). \label{eq:ptwiseident}
\end{align}
\end{lemma}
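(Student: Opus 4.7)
The plan is to prove the identity by direct algebraic manipulation of the right-hand side, rather than by induction on $t$. Introduce the shorthand $M_s \equiv \prod_{r=t+1}^{s} A_r/\hat{\pi}_r$ with $M_t \equiv 1$; these products are almost surely finite thanks to SP.$t$. The elementary IPW identity $\E[(A_t/\pi_t) X \mid \bar{h}_t]=\E[X \mid \bar{h}_t,A_t=1]$ lets one peel off the leading $A_t/\pi_t$ weight in both building blocks, so that
$$
\textnormal{D}_{s}^{t}(\bar{h}_t)=-\E[M_s(\hat{Q}_{s+1}-\hat{Q}_{s})\mid\bar{h}_t,A_t=1], \qquad \Rem_{s}^{t}(\bar{h}_t)=\E[M_{s-1}(1-\pi_s/\hat{\pi}_s)(\hat{Q}_{s}-Q_{s})\mid\bar{h}_t,A_t=1].
$$
A single use of tower conditioning on $\bar{H}_s$, together with $\E[A_s \mid \bar{H}_s]=\pi_s$ and the $\bar{H}_s$-measurability of $M_{s-1}$ and $\hat{\pi}_s$, yields the identity $\E[M_{s-1}(\pi_s/\hat{\pi}_s)g(\bar{H}_s)\mid\bar{h}_t,A_t=1]=\E[M_s g(\bar{H}_s)\mid\bar{h}_t,A_t=1]$ for any $\bar{H}_s$-measurable $g$. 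Applying it with $g=\hat{Q}_s-Q_s$ rewrites $\Rem_{s}^{t}=\E[(M_{s-1}-M_s)(\hat{Q}_s-Q_s)\mid\bar{h}_t,A_t=1]$.

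The heart of the argument is a pair of consequences of the defining recursion $Q_s=\E[Q_{s+1}(\bar{H}_{s+1})\mid\bar{H}_s,A_s=1]$. Conditioning on $\bar{H}_s$ and pulling $A_s/\hat{\pi}_s$ into the inner expectation gives
$$
\E[M_s Q_{s+1}\mid\bar{h}_t,A_t=1]=\E[M_s Q_s\mid\bar{h}_t,A_t=1],
$$
and the analogous computation at level $s-1$, using $Q_{s-1}=\E[Q_s\mid\bar{H}_{s-1},A_{s-1}=1]$, gives
$$
\E[M_{s-1}Q_s\mid\bar{h}_t,A_t=1]=\E[M_{s-1}Q_{s-1}\mid\bar{h}_t,A_t=1].
$$

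With these ingredients in hand I substitute into $\sum_{s=t}^{K}\textnormal{D}_{s}^{t}+\sum_{s=t+1}^{K}\Rem_{s}^{t}$ and separate it into six subsums of the form $\pm\E[M_{s'}F\mid\bar{h}_t,A_t=1]$ with $F\in\{\hat{Q}_{s},\hat{Q}_{s+1},Q_{s},Q_{s-1}\}$. One index shift $s\mapsto s+1$ collapses the pure-$\hat{Q}$ block to $\hat{Q}_t-\E[M_K L_{K+1}\mid\bar{h}_t,A_t=1]$, using $M_t=1$ and $\hat{Q}_{K+1}=L_{K+1}$. Invoking the two $Q$-identities above and performing the analogous index shift collapses the pure-$Q$ block to $-Q_t+\E[M_K L_{K+1}\mid\bar{h}_t,A_t=1]$. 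The two $\E[M_K L_{K+1}\mid\cdot]$ boundary terms cancel, leaving exactly $\hat{Q}_t-Q_t$. The main obstacle is purely bookkeeping: one must keep precise track of the endpoints $s=t$ and $s=K$ (notably that $\Rem_{t}^{t}$ is absent, $M_t=1$, and $\hat{Q}_{K+1}=L_{K+1}$) and of which sums carry index $s$ versus $s-1$ before the two $Q$-identities are invoked. All conditional expectations are well-defined thanks to SP.$t$.
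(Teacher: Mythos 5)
Your proof is correct. All the intermediate identities check out: the IPW identity to replace the $A_t/\pi_t$ weight by conditioning on $A_t=1$, the tower-property identity $\E[M_{s-1}(\pi_s/\hat{\pi}_s)g(\bar{H}_s)\mid\bar{h}_t,A_t=1]=\E[M_s g(\bar{H}_s)\mid\bar{h}_t,A_t=1]$ (valid because $M_{s-1}$, $\hat{\pi}_s$, $g$ are $\bar{H}_s$-measurable and $\E[A_s\mid\bar{H}_s]=\pi_s$), the resulting rewriting $\Rem_s^t=\E[(M_{s-1}-M_s)(\hat{Q}_s-Q_s)\mid\bar{h}_t,A_t=1]$, and the two consequences of the recursion $Q_s=\E[Q_{s+1}\mid\bar{H}_s,A_s=1]$ (where $A_s=0$ kills $M_s$, so the identity holds pointwise inside the expectation). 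The telescoping then works exactly as you describe, with the $\E[M_KL_{K+1}\mid\cdot]$ boundary terms cancelling between the $\hat{Q}$ and $Q$ blocks.

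Your route differs from the paper's in direction rather than in substance. The paper starts from the left-hand side, writes $\hat{Q}_t-Q_t=\textnormal{D}_t^t+\E[\hat{Q}_{t+1}-Q_{t+1}\mid\bar{h}_t,A_t=1]$, and then recursively expands the error term $\E[\hat{Q}_{s+1}-Q_{s+1}\mid\bar{h}_s,a_s]$ one step at a time, each application spawning one $\textnormal{D}$ term, one $\Rem$ term, and a new inner expectation, until the recursion terminates at $s=K-1$ because $\hat{Q}_{K+1}=Q_{K+1}$. Unrolling that recursion produces exactly the six subsums you collapse. Starting instead from the right-hand side and telescoping it down to $\hat{Q}_t-Q_t$ is a legitimate and arguably cleaner verification: your explicit $M_s$ bookkeeping and the $(M_{s-1}-M_s)$ reformulation of $\Rem_s^t$ make the cancellation structure transparent, at the cost of having to track index endpoints carefully (which you do, correctly noting that $\Rem_t^t$ is absent, $M_t=1$, and $\hat{Q}_{K+1}=L_{K+1}$). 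The paper's recursive form has the minor advantage of making visible where each remainder term originates (the substitution of $\hat{\pi}_{s+1}$ for $\pi_{s+1}$ at step $s$), which motivates the later analysis of $\Rem_s^t$ via Cauchy--Schwarz.
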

By the triangle inequality, Cauchy-Schwarz, and the positivity assumption on $\hat{\pi}_t$,
\begin{align}
\norm{\hat{Q}_t - Q_t}&\le \norm{\textnormal{D}^t} + \sum_{s=t+1}^K \norm{\Rem_{s}^t}\le \norm{\textnormal{D}^t} + \sum_{s=t+1}^K O\left(\norm{\hat{\pi}_{s} - \pi_{s}}\norm{\hat{Q}_{s} - Q_{s}}\right). \label{eq:CS}
\end{align}
A simple induction argument shows that
\begin{align}
\norm{\hat{Q}_t - Q_t}&\le \norm{\textnormal{D}^t} + \sum_{s=t+1}^K O\left(\norm{\hat{\pi}_{s} - \pi_{s}}\norm{\textnormal{D}^{s}}\right).\label{eq:sdrtypedecomp}
\end{align}
The above teaches us how to obtain an SDR estimator, provided one is willing to replace OR.$t$, OR.$s$ by FO.$t$, FO.$s$ (``correct first-order behavior at time $t,s$''), where
\begin{enumerate}[leftmargin=*,widest={FO.$t$}]
	\item[\mylabel{it:fo}{FO.$t$)}] $\norm{\textnormal{D}^t}$ converges to zero in probability.
\end{enumerate}
The following result is an immediate consequence of (\ref{eq:sdrtypedecomp}).
\begin{theorem}[Achieving an SDR Estimator]\label{thm:achievingsdr}
Fix $t$ and suppose that SP.$t$ holds with probability approaching one. If FO.$t$ and, at each time $s>t$, either TM.$s$ or FO.$s$, then $\hat{Q}_t\rightarrow Q_t$, i.e. $\norm{\hat{Q}_t- Q_t}=o_P(1)$.
\end{theorem}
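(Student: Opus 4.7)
The plan is to read the theorem as the statistical consequence of the deterministic bound (\ref{eq:sdrtypedecomp}) established just above, which was itself obtained from Lemma~\ref{lem:firstordexp} by Cauchy--Schwarz and a backward recursion on $s$. Since (\ref{eq:sdrtypedecomp}) requires only SP.$t$, and SP.$t$ is assumed to hold with probability approaching one, I would restrict attention to the event on which this inequality applies; the complementary event has vanishing probability and is absorbed into the final $o_P(1)$ conclusion.

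On that event, each term on the right-hand side of (\ref{eq:sdrtypedecomp}) is handled in turn. The leading term $\norm{\textnormal{D}^t}$ is $o_P(1)$ directly by FO.$t$. For each $s>t$, the summand $O\bigl(\norm{\hat{\pi}_s-\pi_s}\,\norm{\textnormal{D}^s}\bigr)$ factors as a product of two quantities that are each a priori bounded by a deterministic constant: $\norm{\hat{\pi}_s-\pi_s}\le 1$, since $\hat{\pi}_s$ and $\pi_s$ take values in $[0,1]$, and $\norm{\textnormal{D}^s}$ is bounded by a constant depending only on $\delta$ and the uniform bounds on the outcome, because SP.$t$ makes the inverse-probability weights appearing inside $\textnormal{D}^s$ uniformly bounded. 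Whichever of TM.$s$ or FO.$s$ is assumed to hold then supplies an $o_P(1)$ factor for that summand, while the other factor remains $O(1)$, so in every case the product is $o_P(1)$.

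Summing the $K-t$ such terms together with $\norm{\textnormal{D}^t}$ yields $\norm{\hat{Q}_t - Q_t} = o_P(1)$, which is the conclusion. The only step that warrants any genuine scrutiny is the a priori $L^2$-boundedness of $\textnormal{D}^s$: this is immediate under SP.$t$ whenever the estimators $\hat{Q}_s$ respect the $[0,1]$ bounds on the outcome (as the iTMLE construction developed later in the paper ensures) and, more generally, follows from any assumption that forces the sequence $\{\hat{Q}_s\}$ to be uniformly $L^2(P)$-bounded. All of the substantive work has already been carried out in Lemma~\ref{lem:firstordexp} and in the induction producing (\ref{eq:sdrtypedecomp}); the theorem itself is a direct translation of that deterministic bound into a stochastic-convergence statement, and I do not foresee any separate technical obstacle.
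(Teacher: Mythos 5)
Your proposal is correct and follows exactly the route the paper intends: the theorem is stated as an immediate consequence of the bound (\ref{eq:sdrtypedecomp}), and your term-by-term argument---FO.$t$ killing the leading term, and for each $s>t$ one factor of the product $\norm{\hat{\pi}_{s}-\pi_{s}}\,\norm{\textnormal{D}^{s}}$ being $o_P(1)$ by whichever of TM.$s$ or FO.$s$ holds while the other stays bounded---is precisely the intended reading. Your added remark on the a priori $L^2$-boundedness of $\textnormal{D}^{s}$ under SP.$t$ and the $[0,1]$ bounds on the outcome is a reasonable bit of extra care that the paper leaves implicit in the $O(\cdot)$ constants.
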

In Remark~\ref{rem:sdrunbiasednice}, we sketch an argument showing that FO.$t$ and OR.$t$ are equivalent for the (S)DR unbiased transformation approach. An induction argument also shows that, for the upcoming iTMLE approach, FO.$t$ and OR.$t$ are equivalent if, at each future time point $s$, either OR.$s$ or TM.$s$ holds. Hence, the use of FO.$t$, FO.$s$ in the above theorem rather than OR.$t$, OR.$s$ should not detract from its interest.

\begin{remark}\label{rem:sdrunbiasednice}
We now connect the (S)DR unbiased transformation approach to the first-order expansion (\ref{eq:ptwiseident}). For $P$ almost all $\bar{h}_t$, $\E\left[\widehat{\Gamma}_t\middle|A_t=1,\bar{h}_t\right] - \hat{Q}_t(\bar{h}_t) = - \textnormal{D}^t(\bar{h}_t)$. Thus, the $L^2(P)$ norm of $\E\left[\widehat{\Gamma}_t\middle|A_t=1,\bar{H}_t=\cdot\right] - \hat{Q}_t(\cdot)$ is $o_P(1)$ if and only if $\norm{\textnormal{D}^t}=o_P(1)$. Note that the objective of a regression of $\widehat{\Gamma}_{t,i}$ against $\bar{H}_{t,i}$ among all individuals with $A_{t,i}=1$ is to ensure $\E\left[\widehat{\Gamma}_t\middle|A_t=1,\bar{H}_t=\cdot\right]\approx \hat{Q}_t(\cdot)$, where this approximation can be made precise by using ``$\approx$'' to mean closeness in $L^2(P)$ norm. One could alternatively ensure closeness with respect to a different criterion by choosing a different loss function. Often closeness in one loss implies closeness in another loss (see, e.g., Theorem~\ref{thm:excessriskbd}).
\end{remark}

\begin{remark}
In (\ref{eq:CS}), we applied Cauchy-Schwarz to show that each $\norm{\Rem_{s}^t}$, $s>t$, is big-oh of $\norm{\hat{\pi}_{s} - \pi_{s}}\norm{\hat{Q}_{s} - Q_{s}}$. One can also obtain the bound
\begin{align*}
\norm{\Rem_{s}^t}&\lesssim \E\left[\E\left[\left\{\hat{Q}_{s}-Q_{s}\right\}^2\middle|\bar{H}_t\right]\E\left[\left(\hat{\pi}_{s}-\pi_{s}\right)^2\middle|\bar{H}_t\right]\right]^{1/2}.
\end{align*}
The left-hand side will converge if, for each $\bar{h}_t$, either $Q_{s}$ or $\pi_{s}$ is consistently estimated across all $\bar{h}_{s}$ that have time-$t$ history equal to $\bar{h}_t$. This is weaker than requiring that either $\pi_{s}$ or $Q_{s}$ is consistently estimated, since we only require that the union of $\bar{h}_t$ values on which each of these quantities are consistently estimated is equal to the support of $\bar{H}_t\sim P$.
\end{remark}

\begin{remark}
Let $\norm{f}_{\infty}$ denote the $P$ essential supremum norm and $\norm{\cdot}_1$ denote the $L^1(P)$ norm. In this case, we have that $\sum_{s=t+1}^K \norm{\Rem_{s}^t}_{\infty}\lesssim \norm{\hat{\pi}_{s}-\pi_{s}}_1 \norm{\hat{Q}_{s}^{t}-Q_{s}}_{\infty}$. This seems likely to be useful for constructing confidence bands for $Q_t$. In particular, this suggests that, under some conditions, $\norm{\hat{Q}_t - Q_t}\approx \norm{\textnormal{D}^t}$. Therefore, it generally suffices to develop a confidence band for $\norm{\textnormal{D}^t}$, which is a regression with the dimension of $\bar{H}_t$. If one uses the (S)DR unbiased transformation approach from Remark~\ref{rem:sdrunbiasednice} and implements the time $t$ regression using a kernel regression procedure, then one should be able to study a kernel-weighted empirical process to develop confidence bands. This will presumably give meaningful confidence bands when the dimension of $\bar{H}_t$ is not too large, e.g. when $t=1$ and the baseline covariate is low-dimensional. We will examine this in detail in future works.
\end{remark}

\section{Novel sequential regression procedure}\label{sec:newproc}
We now present a novel, SDR procedure for estimating $Q_{t^0}$, $t^0\ge 0$. We extend the univariate targeting step used in the procedure of \cite{vanderLaan&Gruber12} to infinite-dimensional targeting steps towards $Q_{t^0}$. We refer to this new procedure, presented in Algorithm~\ref{alg:sdr}, as iTMLE. For each $t$, we denote the estimate of $Q_t$ that is targeted towards all outcome regressions $r$ satisfying $s\le r\le t-1$ by $\hat{Q}_{t}^{s}$, and we let $\hat{Q}_{t}^*\equiv \hat{Q}_{t}^{t^0}$.

In this and the proceeding section, we abbreviate the following definitions from (\ref{eq:RemDdef}) for all $s,t$ with $s\ge t\ge t^0$: $\textnormal{D}_{s}^{t,*}\equiv \textnormal{D}_{s}^t(\hat{Q}_{s+1}^*,\hat{Q}_{s}^*)$ and $\Rem_{s}^{t,*}\equiv \Rem_{s}^t(\hat{Q}_{s}^*)$. Recall that $\textnormal{D}_{s}^{t,*}$ and $\Rem_{s}^{t,*}$ are functions mapping from the support of $\bar{H}_t\sim P$ to the real line, so that it makes sense to take $L^2(P)$ norms of these objects to quantify their magnitude.

\vspace{2pt}{\centering
[Algorithm~\ref{alg:sdr} about here.]\par
}\vspace{2pt}

We now analyze the procedure that targets the estimate of $Q_{s}$ towards $Q_t$, $s\ge t$. Define the data-dependent loss function
\begin{align*}
\mathscr{L}_{s}^t(\bar{h}_{s+1};\epsilon_{s}^t)= -&a_t\left\{\prod_{r=t+1}^{s} \frac{a_{r}}{\hat{\pi}_{r}}\right\}\left\{\hat{Q}_{s+1}^*\log \hat{Q}_{s}^{t+1,\epsilon_{s}^t}  + [1-\hat{Q}_{s+1}^*]\log\left[1-\hat{Q}_{s}^{t+1,\epsilon_{s}^t}\right]\right\},
\end{align*}
where above we suppressed the dependence of $\hat{\pi}_{r}$, $\hat{Q}_{s+1}^t$, and $\hat{Q}_{s}^{t+1,\epsilon_{s}^t}$ on $\bar{h}_{r}$, $\bar{h}_{s+1}$, and $\bar{h}_{s}$. One can show that $\E[\mathscr{L}_{s}^t(\bar{H}_{s+1};\epsilon_{s}^t)]$ is minimized at the $\bar{\epsilon}_{s}^t : \mathcal{H}_t\rightarrow\mathbb{R}$ satisfying
\begin{align}
\E\left[\left\{\prod_{r=t+1}^{s} \frac{A_{r}}{\hat{\pi}_{r}}\right\} \hat{Q}_{s}^{t+1,\bar{\epsilon}_{s}^t}(\bar{H}_{s})\middle|\bar{h}_t,A_t=1\right]=\E\left[\left\{\prod_{r=t+1}^{s} \frac{A_{r}}{\hat{\pi}_{r}}\right\} \hat{Q}_{s+1}^*\middle|\bar{h}_t,A_t=1\right], \label{eq:barepsdef}
\end{align}
where we note that a sufficient condition for this $\bar{\epsilon}_{s}^t$ to exist is that $P\{\hat{Q}_{s+1}^*(\bar{H}_{s+1})=0\}<1$ and $P\{\hat{Q}_{s+1}^*(\bar{H}_{s+1})=1\}<1$, i.e. that $\hat{Q}_{s+1}^*$ is not degenerate at zero or one. Define the conditional excess risk $\ER_{s}^t(\bar{h}_t)$ by $\E\left[\mathscr{L}_{s}^t(\bar{H}_{s+1};\hat{\epsilon}_{s}^t)-\mathscr{L}_{s}^t(\bar{H}_{s+1};\bar{\epsilon}_{s}^t)\middle|\bar{h}_t\right]$. The excess risk is defined as the average conditional excess risk, i.e. $\E[\ER_{s}^t(\bar{H}_t)]$. The upcoming lemma bounds the term $\norm{\textnormal{D}_{s}^{t,*}}$ from the upper bound in Lemma~\ref{lem:firstordexp} by the excess risk of the procedure for estimating $\hat{\epsilon}_{s}^t$ plus the deviation between the estimate of $Q_{s}$ targeted towards estimating $Q_{r}$, $r\ge t$, and the estimate of $Q_{s}$ that is targeted towards estimating $Q_{r}$, $r\ge t^0$. By the triangle inequality, controlling $\norm{\textnormal{D}_{s}^{t,*}}$ for all $s\ge t$ suffices to control $\norm{\sum_{s\ge t}\textnormal{D}_{s}^{t,*}}$ and, by Lemma~\ref{lem:firstordexp}, plays an important role in controlling $\norm{\hat{Q}_t^*-Q_t}$. 

\begin{theorem}[Upper bounding $\norm{\textnormal{D}_{s}^{t,*}}$ by an excess risk]\label{thm:excessriskbd}
Fix $s\ge t\ge t^0$. If $P\{\hat{Q}_{s+1}^*=0\}<1$, $P\{\hat{Q}_{s+1}^*=1\}<1$, and SP.$t^0$ holds, then, with probability one over draws $\bar{H}_t\sim P$, $\textnormal{D}_{s}^t(\hat{Q}_{s+1}^*,\hat{Q}_{s}^t)(\bar{H}_t)^2\lesssim \ER_{s}^t(\bar{H}_t)$. Furthermore, $\norm{\textnormal{D}_{s}^{t,*}}\lesssim \E\left[\ER_{s}^t(\bar{H}_t)\right]^{1/2} + \norm{\hat{Q}_{s}^* - \hat{Q}_{s}^t}$.
\end{theorem}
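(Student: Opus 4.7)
The plan is to exploit two features of the Bernoulli log-loss: it admits an exact Bregman decomposition that makes the conditional excess risk equal a weighted KL divergence, and Pinsker's inequality converts that KL into the weighted squared error which, via Cauchy--Schwarz, controls $\textnormal{D}_s^t$. Write $W \equiv A_t \prod_{r=t+1}^s A_r/\hat\pi_r$, $\hat p \equiv \hat Q_s^{t+1,\hat\epsilon_s^t}$, and $\bar p \equiv \hat Q_s^{t+1,\bar\epsilon_s^t}$, and take the fluctuation on the logit scale so that $\hat\eta(\bar h_s) - \bar\eta(\bar h_s) = \hat\epsilon_s^t(\bar h_t) - \bar\epsilon_s^t(\bar h_t)$ depends only on $\bar h_t$. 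First I would derive the population score equation for $\bar\epsilon_s^t$ by differentiating $\E[\mathscr{L}_s^t(\bar H_{s+1};\epsilon)\mid\bar h_t]$ in $\epsilon(\bar h_t)$: the identity $\E[W(\hat Q_{s+1}^* - \bar p)\mid\bar h_t] = 0$ is exactly (\ref{eq:barepsdef}) after dividing by $\pi_t(\bar h_t)$. Since $\textnormal{D}_s^t$ is linear in its second argument, this reduces the claim to controlling
\[
\textnormal{D}_s^t(\hat Q_{s+1}^*, \hat Q_s^t)(\bar h_t) \;=\; \pi_t(\bar h_t)^{-1}\,\E\bigl[W(\hat p - \bar p)\,\bigl|\,\bar h_t\bigr].
\]

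For the pointwise inequality, I would apply Cauchy--Schwarz on the right-hand side and use SP.$t^0$ together with positivity of $\pi_t$ to absorb $\pi_t(\bar h_t)^{-1}\E[W\mid\bar h_t]^{1/2}$ into a constant, yielding $\textnormal{D}_s^t(\hat Q_{s+1}^*,\hat Q_s^t)(\bar h_t)^2 \lesssim \E[W(\hat p - \bar p)^2\mid\bar h_t]$. The central analytic step is then the Bregman computation of $\ER_s^t(\bar h_t)$. Parametrizing via $\eta = \Psi^{-1}(p)$ and the log-partition $F(\eta) = \log(1+e^\eta)$, the cross-entropy satisfies the purely algebraic identity $\ell(y,\hat p) - \ell(y,\bar p) = D_F(\hat\eta,\bar\eta) - (y-\bar p)(\hat\eta-\bar\eta)$, and the standard exponential-family relation gives $D_F(\hat\eta,\bar\eta) = \mathrm{KL}(\bar p\,\|\,\hat p)$. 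Multiplying by $W$, conditioning on $\bar h_t$ with $y=\hat Q_{s+1}^*$, and observing that the linear-in-$y$ cross-term factors as $(\hat\epsilon_s^t(\bar h_t)-\bar\epsilon_s^t(\bar h_t))\,\E[W(\hat Q_{s+1}^* - \bar p)\mid\bar h_t]$, which vanishes by the score equation, yields the clean identity $\ER_s^t(\bar h_t) = \E[W\,\mathrm{KL}(\bar p\,\|\,\hat p)\mid\bar h_t]$. Pinsker's inequality $\mathrm{KL}(\bar p\,\|\,\hat p) \ge 2(\bar p - \hat p)^2$ then finishes the first claim.

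For the $L^2(P)$ inequality, I would use the triangle inequality and linearity of $\textnormal{D}_s^t$ to write $\norm{\textnormal{D}_s^{t,*}} \le \norm{\textnormal{D}_s^t(\hat Q_{s+1}^*,\hat Q_s^t)} + \norm{\textnormal{D}_s^t(\hat Q_{s+1}^*,\hat Q_s^*) - \textnormal{D}_s^t(\hat Q_{s+1}^*,\hat Q_s^t)}$. Integrating the pointwise bound controls the first summand by $\lesssim \E[\ER_s^t(\bar H_t)]^{1/2}$. The residual is the $L^2(P)$ norm of a conditional expectation of $(W/\pi_t)(\hat Q_s^* - \hat Q_s^t)(\bar H_s)$, handled by conditional Jensen combined with the SP.$t^0$ bound on $W/\pi_t$, yielding $\lesssim \norm{\hat Q_s^* - \hat Q_s^t}$. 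The main subtlety I anticipate is keeping the Bregman identification valid even though $\hat Q_{s+1}^*$ is a data-dependent pseudo-label rather than a Bernoulli draw: this works because the decomposition $\ell(y,\hat p) - \ell(y,\bar p) = D_F(\hat\eta,\bar\eta) - (y-\bar p)(\hat\eta-\bar\eta)$ is algebraic in $y$, and the vanishing of the cross-term relies only on the score equation (\ref{eq:barepsdef}). The nondegeneracy hypotheses $P\{\hat Q_{s+1}^* = 0\}<1$ and $P\{\hat Q_{s+1}^* = 1\}<1$ serve precisely to guarantee that a finite minimizer $\bar\epsilon_s^t$ exists, so that $\bar p$ and the Bregman decomposition are well-defined.
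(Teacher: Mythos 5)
Your proof is correct, and for the pointwise inequality it takes a genuinely different route from the paper's. The paper works with the one-dimensional conditional risk $G_{\bar{h}_t}(\varepsilon)$ of a constant fluctuation, observes that $\dot{G}_{\bar{h}_t}[\bar{\epsilon}_{s}^t(\bar{h}_t)]=0$ and $\dot{G}_{\bar{h}_t}[\hat{\epsilon}_{s}^t(\bar{h}_t)]=\pi_t(\bar{h}_t)\,\textnormal{D}_{s}^t(\hat{Q}_{s+1}^*,\hat{Q}_{s}^t)(\bar{h}_t)$, and then applies a mean-value argument to $[\dot{G}_{\bar{h}_t}]^2$ viewed as a function of $G_{\bar{h}_t}$, whose derivative $2\ddot{G}_{\bar{h}_t}$ is uniformly bounded under SP.$t^0$; this converts the squared score directly into the conditional excess risk. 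You instead compute the conditional excess risk exactly as a $W$-weighted KL divergence via the Bregman decomposition of the cross-entropy in the natural parameter --- the linear-in-$y$ cross-term vanishing precisely because $\hat{\eta}-\bar{\eta}$ depends on $\bar{h}_t$ alone and $\bar{\epsilon}_{s}^t$ solves the score equation (\ref{eq:barepsdef}) --- and then pass to the squared score via Pinsker's inequality and weighted Cauchy--Schwarz. Both arguments exploit the same curvature $W\,p(1-p)$ of the logistic loss, but yours produces the exact identity $\ER_{s}^t(\bar{h}_t)=\E[W\,\mathrm{KL}(\bar{p}\,\|\,\hat{p})\mid\bar{h}_t]$ as a byproduct, which is arguably more transparent and avoids the chain-rule/mean-value manipulation on $[\dot{G}]^2$ as a function of $G$. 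Your handling of the second claim --- triangle inequality, integrating the pointwise bound, and conditional Jensen together with SP.$t^0$ to reduce the residual to $\norm{\hat{Q}_{s}^*-\hat{Q}_{s}^t}$ --- coincides with the paper's argument, and your reading of the nondegeneracy hypotheses as guaranteeing existence of a finite minimizer $\bar{\epsilon}_{s}^t$ is exactly their role in the paper.
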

The above shows that $\norm{\textnormal{D}_{s}^{t,*}}$ converges to zero in probability if the excess risk of $\hat{\epsilon}_{s}^t$ for $\bar{\epsilon}_{s}^t$ converges to zero in probability and all targeting steps of the estimate of $Q_{s}$ that occur after the estimate is successfully targeted towards $Q_t$ (small excess risk) have little effect on the estimate. We will formally show that empirical risk minimizers satisfy this latter condition. Note also that, for $t=t^0$, i.e. for the final targeting step for the estimate of each $Q_{s}$, $\hat{Q}_{s}^*=Q_{s}^{t^0}$ by definition so that the latter term above is zero. Thus, $\norm{\textnormal{D}_{s}^{t^0,*}}^2$ converges to zero at least as quickly as does the excess risk $\E\left[\ER_{s}^{t^0}(\bar{H}_{t^0})\right]$. 

\section{Explicit guarantees for empirical risk minimizers}\label{sec:erm}
To establish concrete results about the iTMLE, it is easiest to analyze one particular class of estimators. Here we focus on estimators derived from empirical risk minimization (ERM) \citep[see, e.g.,][]{VandeGeer1990,Vapnik1991}.

We first give a brief review of ERM. Again suppose $Z\equiv(X,Y)\sim \nu$ and that $f_0$ minimizes the risk corresponding to some loss $\mathscr{L}$. An ERM attempts to estimate $f_0$ by letting $\hat{f}= \argmin_{f\in\mathcal{F}} \nu_n \mathscr{L}(\cdot;f)$, where $\nu_n \mathscr{L}(\cdot;f)$ is the empirical mean of $\mathscr{L}(Z;f)$ from an i.i.d. sample of size $n$ drawn from $\nu$, and $\mathcal{F}$ is some user-specified index set. While in practice this index set may depend on sample size, in this work we focus our analysis on a sample-size-independent $\mathcal{F}$. One could alternatively study a sieved estimator for which $\mathcal{F}$ grows with sample size.

For ease of notation, we assume that, when estimating each $Q_t$, the same class $\mathcal{F}^t$ is used to estimate both $\epsilon_{s}^t$, $s>t$, and also to estimate $Q_t$. We assume that $\mathcal{F}^t$ contains the trivial function mapping each $\bar{h}_t$ to zero. It is hard to imagine a useful class $\mathcal{F}^t$ that would violate this condition. We assume that each $\hat{\epsilon}_{s}^t$ is obtained via ERM, so that $\hat{\epsilon}_{s}^t\in \argmin_{\epsilon_{s}^t\in\mathcal{F}^t} P_n \mathscr{L}_{s}^t(\cdot;\epsilon_{s}^t)$. For each $t$, use the following correct specification assumptions.
\begin{enumerate}[leftmargin=*,widest={CS.$t$}]
	\item[\mylabel{it:cs}{CS.$t$)}] For $s\ge t$, the data-dependent functions $\bar{\epsilon}_{s}^t$ defined in (\ref{eq:barepsdef}) fall in $\mathcal{F}^t$ with probability approaching one.
\end{enumerate}
\begin{remark}[Alternative to CS]
If each $\hat{Q}_{s}^*$, $s>t$, has a (possibly misspecified) limit, then one can replace the condition that each $\bar{\epsilon}_{s}^t$, which relies on the sample-dependent estimate $\hat{Q}_{s}^*$, falls in $\mathcal{F}^t$ with the condition that the limit of $\bar{\epsilon}_{s}^t$, $s\ge t$, falls in $\mathcal{F}^t$. This is useful because, when $\hat{Q}_{s}^*$, $s>t$, is consistent, the limit of $\bar{\epsilon}_{s}^t$ is the constant function zero. Hence, for these $s$ our assumption that $\mathcal{F}^t$ contain this trivial function suffices. For $s=t$, replacing the above assumption by the assumption that the limit of $\bar{\epsilon}_{s}^t\in\mathcal{F}^t$ is like assuming OR.$t$ provided at least one of one of $\hat{Q}_{t+1}^*$ or $\hat{\pi}_{t+1}$ is consistent. 
\end{remark}
For ease of analysis, we also rely on the following assumption on each $\mathcal{F}^t$.
\begin{enumerate}[leftmargin=*,widest={BD}]
	\item[\mylabel{it:bd}{BD)}] For each $t\ge t^0$, the elements in $\mathcal{F}^t$ are uniformly bounded in $[-c,c]$, $c<\infty$.
\end{enumerate}
Each result also uses an empirical process condition to ensure that the class $\mathcal{F}^t$ is not too large, and also that the estimates of the propensity scores are well-behaved.
\begin{enumerate}[leftmargin=*,widest={DC}]
	\item[\mylabel{it:dc}{DC)}] For all $t\ge t^0$, $\mathcal{F}^{t}$ is a Donsker class \citep{vanderVaartWellner1996} and $\hat{\pi}_{t}$ belongs to a fixed Donsker class $\mathcal{D}_{t}$ with probability approaching one.
\end{enumerate}
\begin{remark}
Under the weaker condition that $\mathcal{F}^t$ is a Glivenko-Cantelli class and $\hat{\pi}_t$ belongs to a Glivenko-Cantelli class $\mathcal{D}_t$, one can obtain the same results as those that we will present in this section, with the only change being that each $o_P(n^{-1/4})$ is replaced by $o_P(1)$.
\end{remark}
\begin{theorem}[ERMs achieve SDR estimation of $Q_t$]\label{thm:erm}
If $t\ge t^0$ is such that CS.$t$, SP.$t$ holds with probability approaching one, BD, and DC, then, with probability approaching one,
\begin{align*}
\norm{\hat{Q}_t^*-Q_t}\lesssim\,& \sum_{s\ge t+1} \norm{\hat{Q}_{s}^*-Q_{s}}\norm{\hat{\pi}_{s}-\pi_{s}} + \sum_{s\ge t} \left(\norm{\hat{Q}_{s}^*-\hat{Q}_{s}^t} + \E[\mathcal{E}_{s}^t(\bar{H}_t)]^{1/2}\right).
\end{align*}
Furthermore, $\norm{\hat{Q}_t^*-Q_t}\lesssim \sum_{s\ge t+1} \norm{\hat{Q}_{s}^*-Q_{s}}\norm{\hat{\pi}_{s}-\pi_{s}} + o_P(n^{-1/4})$.
\end{theorem}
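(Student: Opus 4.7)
The plan is to combine the first-order expansion of Lemma~\ref{lem:firstordexp} with the excess-risk bound of Theorem~\ref{thm:excessriskbd}, and then close the argument using standard empirical process machinery for ERM over Donsker classes. The first display follows deterministically from earlier results; the ``furthermore'' statement additionally requires driving two families of residual terms to $o_P(n^{-1/4})$.

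For the first display, I would apply Lemma~\ref{lem:firstordexp} to $\hat{Q}_t^*$, take $L^2(P)$ norms, and use the triangle inequality to obtain $\norm{\hat{Q}_t^* - Q_t} \le \norm{\textnormal{D}^{t,*}} + \sum_{s>t} \norm{\Rem_s^{t,*}}$. Cauchy--Schwarz combined with the positivity of the estimated treatment mechanisms gives $\norm{\Rem_s^{t,*}} \lesssim \norm{\hat{\pi}_s - \pi_s}\norm{\hat{Q}_s^* - Q_s}$ for each $s > t$. For the leading drift term, another triangle inequality gives $\norm{\textnormal{D}^{t,*}} \le \sum_{s\ge t} \norm{\textnormal{D}_s^{t,*}}$, and Theorem~\ref{thm:excessriskbd} bounds each summand by $\E[\ER_s^t(\bar{H}_t)]^{1/2} + \norm{\hat{Q}_s^* - \hat{Q}_s^t}$. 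Summing all contributions yields the first display.

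To refine to the second display, I would next show that each $\E[\ER_s^t(\bar{H}_t)]^{1/2} = o_P(n^{-1/4})$ by classical ERM theory. BD ensures the loss $\mathscr{L}_s^t(\cdot;\epsilon)$ is uniformly bounded over $\epsilon \in \mathcal{F}^t$; DC together with preservation of the Donsker property under products with bounded functions and compositions with the bounded Lipschitz maps $\Psi$ and $\log$ implies that the induced loss class is Donsker; and CS.$t$ places the population minimizer $\bar{\epsilon}_s^t$ in $\mathcal{F}^t$ with probability tending to one. Starting from the ERM inequality $P_n \mathscr{L}_s^t(\hat{\epsilon}_s^t) \le P_n \mathscr{L}_s^t(\bar{\epsilon}_s^t)$ and the preliminary consistency $\hat{\epsilon}_s^t \to \bar{\epsilon}_s^t$ (via a Glivenko--Cantelli/peeling argument), results along the lines of \citet{vanderVaartWellner1996}, Theorem~3.4.1, then yield $\E[\ER_s^t(\bar{H}_t)] = o_P(n^{-1/2})$, hence the claimed square-root rate.

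The main obstacle is the companion bound $\norm{\hat{Q}_s^* - \hat{Q}_s^t} = o_P(n^{-1/4})$ for each $s \ge t$. I would telescope $\hat{Q}_s^* - \hat{Q}_s^t = \sum_{r=t^0}^{t-1}\bigl(\hat{Q}_s^r - \hat{Q}_s^{r+1}\bigr)$ and bound each increment using that $0 \in \mathcal{F}^r$, so the ERM $\hat{\epsilon}_s^r$ achieves empirical risk no larger than that of the trivial update $\epsilon=0$; a uniform empirical process argument under DC/BD then transfers this to a population-scale risk gap of order $o_P(n^{-1/2})$, and the cross-entropy structure of $\mathscr{L}_s^r$ together with the Lipschitz property of $\Psi$ is intended to convert this risk-scale control into the desired $L^2$-scale control on the outcome-regression update. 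The subtle step is this last conversion, since the population minimizer $\bar{\epsilon}_s^r$ need not be zero: as flagged by the remark following CS.$t$, the argument likely must be nested with an induction (for instance over $s$ from $K$ downward) in which consistency of $\hat{Q}_{s+1}^*$ forces $\bar{\epsilon}_s^r$ toward the zero function, so that ERM consistency then drives the induced update to be genuinely small at the required rate.
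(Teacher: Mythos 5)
Your treatment of the first display and of the excess-risk terms matches the paper's route exactly: Lemma~\ref{lem:firstordexp} plus triangle inequality and Cauchy--Schwarz for the remainder terms, Theorem~\ref{thm:excessriskbd} for the drift terms, and the standard ERM-over-Donsker-class argument (the ERM basic inequality, permanence of the Donsker property, the quadratic behavior of the cross-entropy loss, and asymptotic equicontinuity) to get $\E[\mathcal{E}_s^t(\bar{H}_t)]=o_P(n^{-1/2})$, which is precisely the paper's Lemma~\ref{lem:targQkp}. That part of the proposal is sound.

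The genuine gap is in how you propose to close the bound on $\norm{\hat{Q}_s^*-\hat{Q}_s^t}$. You correctly telescope over the targeting steps and correctly flag that the difficulty is that $\bar{\epsilon}_s^r$ need not be zero, but the mechanism you invoke --- that consistency of $\hat{Q}_{s+1}^*$ forces $\bar{\epsilon}_s^r$ toward zero --- is not available under the theorem's hypotheses. Theorem~\ref{thm:erm} is the SDR bound: it must hold when $\hat{Q}_{s+1}^*$ is inconsistent (that is the whole point of retaining the product terms $\norm{\hat{Q}_s^*-Q_s}\norm{\hat{\pi}_s-\pi_s}$), so an argument predicated on consistency of the later regressions cannot establish it. The paper's Theorem~\ref{thm:preserverate} instead runs an induction over the targeting index $r=t,t-1,\ldots,t^0$ (not over $s$), with inductive hypothesis that both $\norm{\hat{Q}_s^t-\hat{Q}_s^r}=o_P(n^{-1/4})$ and $\E[\mathcal{E}_s^r(\bar{H}_r)]=o_P(n^{-1/2})$. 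The key identity you are missing is that, because the logit-scale updates are additive ($\hat{Q}_s^{r}=\hat{Q}_s^{r+1,\hat{\epsilon}_s^{r}}$), the population risk gap of the trivial update at level $r$ satisfies
\begin{align*}
\E\left[\mathscr{L}_s^{r}(\bar{H}_{s+1};0)-\mathscr{L}_s^{r}(\bar{H}_{s+1};\bar{\epsilon}_s^{r})\right]\lesssim \E[\mathcal{E}_s^{r+1}(\bar{H}_{r+1})],
\end{align*}
i.e.\ it is controlled by the excess risk already achieved at level $r+1$, using only that $\bar{\epsilon}_s^{r+1}$ is the unconstrained risk minimizer and SP. Combined with the strong convexity of the weighted cross-entropy risk (which converts risk gaps into $L^2$ distances, $\E[\mathscr{L}(\epsilon)-\mathscr{L}(\bar{\epsilon}_s^r)]\gtrsim\norm{\epsilon-\bar{\epsilon}_s^r}^2$) and the Lipschitz property of $\Psi$, this propagates the smallness from the level where CS.$t$ holds (the base case, via Lemma~\ref{lem:targQkp}) down through all subsequent targeting steps, with a final Donsker equicontinuity bootstrap upgrading $O_P(n^{-1/4})$ to $o_P(n^{-1/4})$. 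Without this chained excess-risk identity your argument does not close.
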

\begin{proof}[Proof of Theorem~\ref{thm:erm}]
We first use the bound in Lemma~\ref{lem:firstordexp}. Lemma~\ref{lem:targQkp} in Appendix~\ref{lem:targQkp} shows that each $\E[\mathcal{E}_{s}^t(\bar{H}_t)] = o_P(n^{-1/2})$. Cauchy-Schwarz and the fact that SP.$t$ holds with probability approaching one show that each $\norm{\Rem_{s}^t}$ is upper bounded by a constant times $\norm{\hat{Q}_{s}^*-Q_{s}}\norm{\hat{\pi}_{s}-\pi_{s}}$ with probability approaching one. The upcoming Theorem~\ref{thm:preserverate} shows that CS.$t$ and the other conditions of this theorem imply that targeting the estimates of each $Q_{s}$ has little effect on the estimates, i.e. $\norm{\hat{Q}_{s}^*-\hat{Q}_{s}^t}=o_P(n^{-1/4})$.
\end{proof}
We now make explicit the sense in which the above establishes the SDR property of our estimator. Suppose that $CS.t^0$ holds and, for each $t>t^0$, at least one of CS.$t$ and TM.$t$, i.e. $\norm{\hat{\pi}_t-\pi_t}=o_P(1)$, holds. A straightforward induction argument with inductive hypothesis ``CS.$t$ implies $\norm{\hat{Q}_t^*-Q_t}=o_P(1)$'' from $t=K,\ldots,t^0$ then shows that $\norm{\hat{Q}_{t^0}^*-Q_{t^0}}=o_P(1)$. Thus, our approach is SDR once we replace OR.$t$ by the related, but somewhat more technical, condition CS.$t$. If each $\norm{\hat{\pi}_t-\pi_t}$ is $o_P(n^{-1/4})$, which is achievable if $\hat{\pi}_t$ is an ERM from a correctly specified Donsker class, then the same induction argument holds, but now with the $o_P(1)$ replaced by $o_P(n^{-1/4})$ so that $\norm{\hat{Q}_{t^0}^*-Q_{t^0}}=o_P(n^{-1/4})$.

We now show that, for $s\ge t\ge t^0$, targeting the estimate  of $Q_{s}$ towards all $r=t-1,\ldots,t^0$ has little effect on the estimate if CS.$t$ holds.
\begin{theorem}\label{thm:preserverate}
Fix a $t\ge t^0$ for which CS.$t$ holds and let $s\ge t$. If SP.$t^0$ holds with probability approaching one, BD, and DC, then $\norm{\hat{Q}_{s}^*-\hat{Q}_{s}^t} = o_P(n^{-1/4})$.
\end{theorem}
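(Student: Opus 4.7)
The natural first move is a telescoping reduction: since $\hat{Q}_{s}^{*} - \hat{Q}_{s}^{t} = \sum_{r=t^0}^{t-1}(\hat{Q}_{s}^{r} - \hat{Q}_{s}^{r+1})$ and $K$ is fixed, the triangle inequality reduces the claim to $\norm{\hat{Q}_{s}^{r} - \hat{Q}_{s}^{r+1}} = o_P(n^{-1/4})$ for each $r \in \{t^0,\ldots,t-1\}$. Because the update $\hat{Q}_{s}^{r} = \hat{Q}_{s}^{r+1,\hat{\epsilon}_{s}^{r}}$ is performed via the logistic link, BD together with the standing non-degeneracy of $\hat{Q}_{s}^{r+1}$ makes the update Lipschitz in $\hat{\epsilon}_{s}^{r}$, so it suffices to show each $\norm{\hat{\epsilon}_{s}^{r}} = o_P(n^{-1/4})$ for $r<t$.

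I would prove this by a downward induction on $r$ from $r=t$ to $r=t^0$ with hypothesis ``$\norm{B_{r}(\hat{Q}_{s}^{r})} = o_P(n^{-1/4})$,'' where $B_{r}(\tilde{Q})(\bar{h}_r) \equiv \E[\prod_{u=r+1}^{s}(A_u/\hat{\pi}_u)\{\tilde{Q} - \hat{Q}_{s+1}^{*}\}\mid \bar{h}_r, A_r=1]$ is the conditional-expectation residual whose vanishing at $\hat{Q}_{s}^{r+1,\bar{\epsilon}_{s}^{r}}$ defines $\bar{\epsilon}_{s}^{r}$ via (\ref{eq:barepsdef}). The base case $r=t$ follows from CS.$t$ combined with the same ERM machinery used for Theorem~\ref{thm:excessriskbd}: correct specification places $\bar{\epsilon}_{s}^{t} \in \mathcal{F}^{t}$ with probability approaching one, DC and BD yield excess risk $o_P(n^{-1/2})$, and uniform strong convexity of the cross-entropy loss on the bounded regime converts this to $\norm{\hat{\epsilon}_{s}^{t}-\bar{\epsilon}_{s}^{t}} = o_P(n^{-1/4})$; combined with the identity $B_{t}(\hat{Q}_{s}^{t+1,\bar{\epsilon}_{s}^{t}})\equiv 0$ and Lipschitz dependence of $B_t$ on its first argument, this gives $\norm{B_{t}(\hat{Q}_{s}^{t})} = o_P(n^{-1/4})$.

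The inductive step uses two ingredients. First, the tower identity $B_{r}(\hat{Q}_{s}^{r+1})(\bar{h}_r) = \E[(\pi_{r+1}/\hat{\pi}_{r+1})\, B_{r+1}(\hat{Q}_{s}^{r+1})\mid \bar{h}_r, A_r=1]$, combined with SP.$t^0$ and Cauchy--Schwarz, pushes the inductive $o_P(n^{-1/4})$ control of the residual from level $r+1$ down to level $r$, evaluated at the same estimate $\hat{Q}_{s}^{r+1}$. Second, the map $\epsilon \mapsto B_{r}(\hat{Q}_{s}^{r+1,\epsilon})$ is invertible near $\epsilon=0$, since its Gateaux linearization at $\epsilon=0$ is multiplication by the almost surely positive and bounded quantity $\E[\prod_{u=r+1}^{s}(A_u/\hat{\pi}_u)\hat{Q}_{s}^{r+1}(1-\hat{Q}_{s}^{r+1})\mid \bar{h}_r, A_r=1]$. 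Together these give $\norm{\bar{\epsilon}_{s}^{r}} = o_P(n^{-1/4})$. Running one more ERM step at level $r$ then yields $\norm{\hat{\epsilon}_{s}^{r}} = o_P(n^{-1/4})$, and Lipschitz dependence of $B_r$ on its first argument closes the induction.

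The principal obstacle is that at lower levels $r<t$ we lack CS.$r$, so the ERM acts over a possibly misspecified class $\mathcal{F}^{r}$. The rescue is the induction itself: it guarantees that $\bar{\epsilon}_{s}^{r}$ is $o_P(n^{-1/4})$-close to the element $0\in\mathcal{F}^{r}$, so the within-class approximation error is itself $o_P(n^{-1/4})$ and the standard Donsker-plus-strong-convexity argument still delivers $\norm{\hat{\epsilon}_{s}^{r}-\bar{\epsilon}_{s}^{r}}=o_P(n^{-1/4})$. Because $K$ is fixed, the accumulated error over the at most $K$ inductive rungs remains $o_P(n^{-1/4})$.
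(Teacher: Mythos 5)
Your proposal is correct and follows the same skeleton as the paper's proof: a downward induction from $r=t$ to $r=t^0$, a base case powered by CS.$t$ (the paper's Lemma~\ref{lem:targQkp}), the reduction via the Lipschitz property of the expit to bounding $\norm{\hat{\epsilon}_{s}^{r}}$, and the crucial observation that at levels $r<t$ the missing condition CS.$r$ is replaced by the assumption $0\in\mathcal{F}^{r}$ together with the fact that the population minimizer $\bar{\epsilon}_{s}^{r}$ is already nearly zero. Where you genuinely diverge is in \emph{how} you establish that last fact. The paper works entirely at the level of risks: it uses the offset-composition identity \eqref{eq:riskub}, namely that $\mathscr{L}_{s}^{r}(\cdot;\epsilon)$ with offset $\logit\hat{Q}_{s}^{r+1}$ is, up to the weight ratio $A_{r}/\hat{\pi}_{r+1}$, the level-$(r+1)$ loss evaluated at $\hat{\epsilon}_{s}^{r+1}+\epsilon$, so that $P[\mathscr{L}_{s}^{r}(\cdot;0)-\mathscr{L}_{s}^{r}(\cdot;\bar{\epsilon}_{s}^{r})]\lesssim \E[\mathcal{E}_{s}^{r+1}(\bar{H}_{r+1})]$ directly; its inductive hypothesis therefore carries the excess risk $\E[\mathcal{E}_{s}^{r}]=o_P(n^{-1/2})$ alongside the $L^2$ statement. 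You instead track the conditional-bias functional $B_{r}$, push it down one level via the tower/change-of-measure identity, and invert the score map $\epsilon\mapsto B_{r}(\hat{Q}_{s}^{r+1,\epsilon})$ using the positivity and boundedness of its derivative, then convert $\norm{\bar{\epsilon}_{s}^{r}}=o_P(n^{-1/4})$ back into a risk bound via the bounded second derivative. Both routes are valid and are really two faces of the same quadratic loss geometry (Theorem~\ref{thm:excessriskbd} is exactly the statement that squared conditional bias and excess risk are comparable); the paper's risk identity is slightly more economical because it never needs the pointwise inverse-function step, while your version makes the mechanism more transparent — the level-$(r+1)$ targeting has already annihilated the residual that level $r$ would otherwise have to fit. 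Two details you gloss over but which the paper spells out: the two-stage application of asymptotic equicontinuity (first to get $O_P(n^{-1/4})$, hence $o_P(1)$, then to upgrade the empirical-process term to $o_P(n^{-1/2})$), and the implicit requirement that $\hat{Q}_{s}^{r+1,\tilde{\epsilon}}(1-\hat{Q}_{s}^{r+1,\tilde{\epsilon}})$ stay bounded away from zero along the segment to $\bar{\epsilon}_{s}^{r}$ — but the paper's own proof relies on the same nondegeneracy, so neither is a gap relative to the published argument.
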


\begin{remark}[Improved rate quantification]
Under entropy integral bounds on the Donsker classes in DC, local properties of the empirical process would yield a better understanding of the $o_P(n^{-1/4})$ rates above. One could use local maximal inequalities for bracketing entropy \citep{vanderVaartWellner1996} or uniform entropy \citep{vanderVaart&Wellner2011}. 
\end{remark}

\begin{remark}[Conjectured rate optimality when DR terms are small]\label{rem:rightrate}
Suppose that CS.$t$ and TM.$t$ hold at every time point $t>t^0$ and that CS.$t^0$ holds. Further suppose that $\norm{\hat{\pi}_t-\pi_t}=o_P(n^{-1/4})$ for each $t>t^0$, which will be the case if each $\pi_t$ is estimated using an ERM over a correctly specified Donsker class. In this case a simple induction argument shows that
\begin{align*}
\norm{\hat{Q}_{t^0}^*-Q_{t^0}}&\lesssim 
\sum_{t\ge t^0} \left\{(P_n-P)\left[\mathscr{L}_t^{t^0}(\cdot\,;\bar{\epsilon}_t^{t^0})-\mathscr{L}_t^{t^0}(\cdot\,;\hat{\epsilon}_t^{t^0})\right]\right\}^{1/2} + o_P(n^{-1/2}).
\end{align*}
The first inequality holds whether or not CS.$t^0$ is true, but the second uses CS.$t^0$. In particular, the second inequality holds because each $\hat{\epsilon}_t^{t^0}$ is an ERM over $\mathcal{F}^{t^0}$ and each $\bar{\epsilon}_t^{t^0}\in \mathcal{F}^{t^0}$ by CS.$t^0$. Even in a correctly specified parametric model, the leading term is only $O_P(n^{-1/2})$, so the leading sum above is always expected to dominate. As the rate of convergence of the empirical processes can be controlled by the size of the class $\mathcal{F}^{t^0}$, we see that $\norm{\hat{Q}_{t^0}^*-Q_{t^0}}$ converges to zero at the rate dictated by size of the class $\mathcal{F}^{t^0}$, where the size of the class can be quantified using metric entropy \citep{vanderVaartWellner1996}.

Compare this to earlier sequential regression procedures, whose rate of convergence is typically dominated by the size of the largest $\mathcal{F}^t$, $t\ge t^0$. As $\bar{H}_{t^0}$ is necessarily lower dimensional than $\bar{H}_t$, $t>t^0$, we would typically expect that $\mathcal{F}^{t^0}$ has a smaller entropy integral than $\mathcal{F}^t$. Hence, we expect that traditional sequential regression procedures have rate dominated by the size of $\mathcal{F}^K$. It seems likely that this fact enables the construction of confidence sets for $Q_{t^0}$. We will examine this further in future works.
\end{remark}

\begin{remark}[Asymptotic linearity]
A stronger result than that of Theorem~\ref{thm:erm} can be obtained if $t^0=0$ so that $Q_{t^0}$ is a real number. Suppose that the conditions of Theorem~\ref{thm:erm} hold and there exists a function $\IF : \bar{\mathcal{H}}_{K+1}\rightarrow\mathbb{R}$ such that $\norm{\IF-\sum_{t=0}^K \widehat{\IF}_t^{\hat{\epsilon}_t^{0}}}=o_P(1)$, where
\begin{align*}
\widehat{\IF}_t^{\epsilon_t^{0}}(\bar{h}_{t+1})\equiv \left\{\prod_{s=1}^{t}\frac{a_{s}}{\hat{\pi}_{s}(\bar{h}_{s})}\right\}\left\{\hat{Q}_{t+1}^*(\bar{h}_{t+1})-\hat{Q}_{t}^{1,\epsilon_{t}^{0}}(\bar{h}_{t})\right\}.
\end{align*}
Here, $\hat{Q}_{t}^{1,\epsilon_{t}^{0}}= \hat{Q}_{t}^{t^0+1,\epsilon_{t}^{t^0}}$ is defined in Algorithm~\ref{alg:sdr}. In particular, the fact that, for each $t$, $\{\mathscr{L}_{t}^{0}(\cdot,\epsilon_{t}^{0}) : \epsilon_{t}^{0}\in\mathcal{F}^{0}=\mathbb{R}\}$ is a parametric class ensures that $0=\left.\frac{\partial}{\partial\epsilon_{t}^{0}} P_n \mathscr{L}_{t}^{0}(\cdot,\epsilon_{t}^{0})\right|_{\epsilon_{t}^{0}=\hat{\epsilon}_{t}^{0}} = P_n \widehat{\IF}_{t}^{\hat{\epsilon}_{t}^{0}}$. Noting that $\textnormal{D}_{t}^{0}=-P\widehat{\IF}_{t}^{\hat{\epsilon}_{t}^{0}}$,
\begin{align*}
\sum_{t=0}^K \textnormal{D}_{t}^{0}=(P_n-P)\sum_{t=0}^K\widehat{\IF}_{t}^{\hat{\epsilon}_{t}^{0}} = (P_n-P)\IF - (P_n-P)\left[\IF - \sum_{t=0}^K\widehat{\IF}_{t}^{\hat{\epsilon}_{t}^{0}}\right].
\end{align*}
By $\norm{\IF-\sum_{t=0}^{K} \widehat{\IF}_{t}^{\hat{\epsilon}_{t}^{0}}}=o_P(1)$, DC, permanence properties of Donsker classes \citep{vanderVaartWellner1996}, \ref{it:bd}, and \ref{it:sp}, the latter term is $o_P(n^{-1/2})$. Thus, by Lemma~\ref{lem:firstordexp},
\begin{align*}
\hat{Q}_{0}^*-Q_{0}&= (P_n-P)\IF + \sum_{t=1}^{K} \Rem_{t}^{0} + o_P(n^{-1/2}),
\end{align*}
and so $\hat{Q}_{0}^*$ is an asymptotically linear estimator of $Q_{0}$ with influence function $\IF$ if the remainder is $o_P(n^{-1/2})$. If one has not used known values of $\pi_{t}$ or correctly specified a parametric model for each $\pi_{t}$, $t\ge 1$, then often $\IF$ is the canonical gradient in the nonparametric model \citep{Pfanzagl1990}.
\end{remark}

\section{Simulation studies}\label{sec:sim}
\noindent\textit{Simulation setup.} We conduct a simulation study that evaluates the finite sample behavior
of the two SDR methods presented. All simulations report (i) the
mean-squared error (MSE) for the outcome regression that conditions on
baseline covariates only, i.e. $Q_{1}$; (ii) the bias and the coverage
of a two-sided 95\% confidence interval given by the various estimators
for the marginal parameter $Q_{0}$. As most existing methods are designed
to focus on estimating $Q_{0}$, they will not necessarily perform well
for $Q_{1}$.

We compare the performance of the following estimators: LTMLE, doubly robust unbiased transformation
(DR Transform), and iTMLE. We also implement a na\"{i}ve plug-in estimator (Direct Plugin),
that lets $\hat{Q}_{K+1}(\bar{H}_{K+1})\equiv L_{K+1}$, and recursively
from $t=K,\ldots,0$, regresses $\hat{Q}_{t+1,i}$ against $\bar{H}_{t,i}$
among all individuals $i$ with $A_{t,i}=1$. The na\"{i}ve plug-in estimator
yields estimates of $Q_{0}$ and $Q_{1}$, though does not yield a confidence
interval for $Q_{0}$. For estimation of the marginal mean parameter
$Q_{0}$, we also evaluate the bias for the inverse probability weighted
estimator (IPW). The performance of these estimators is evaluated under
various model specification scenarios for the outcome regressions $\hat{Q}_{t}$
and the propensity scores $\hat{\pi}_{t}$, as described below. We do not evaluate the performance of the Bang \& Robins (BR) estimator
in its original form because LTMLE can be viewed as its robust extension.
Some of the data generating distributions used in this simulation
study would yield an inconsistent BR, thus not providing a fair comparison. The performance of the estimators is evaluated
over $1000$ Monte Carlo draws. All simulations
are carried out in \texttt{R} \citep{R2016} using the packages
\emph{$\mathtt{simcausal}$} \citep{Sofryginetal2017} and \emph{$\mathtt{stremr}$}
\citep{Sofryginetal2016}. The code for our simulation
is available in a github repository (\url{http://github.com/osofr/SDRsimstudy}).

Our simulation study consists of two scenarios. The data-generating distributions for Simulations 1 and 2 are described in detail
in Appendix~\ref{app:simDGD}. Here we give an overview of the simulation methods. Simulation 1
is a proof of concept with a simple longitudinal structure
and 3 time-point interventions, i.e., $K=3$. For this scenario,
the estimates are evaluated from a sample of $n=500$ i.i.d. units. The
outcome regressions $Q_{t}$ and propensity scores $\pi_{t}$ are estimated
using the main terms logistic regressions. The estimator of the outcome regression $Q_{1}$ is always correctly
specified in all four 
scenarios.  Each estimator is evaluated
based on the following four regression specification scenarios of the remaining outcome regressions and propensity scores: \emph{Qc.gc},
when all $Q_{t}$ and $\pi_{t}$ are based on correctly specified regressions;
\emph{Qi.gc}, all $Q_{t}$ for
$t>1$ are incorrect, and $\pi_{t}$ are correctly specified for all $t$;
\emph{Qc.gi}, when $Q_{t}$ are correctly specified for all $t$, while
$\pi_{t}$ are incorrect for all $t$;\emph{ Qi.gi}, $Q_{t}$ are incorrectly specified for $t>1$ and $g_{t}$
is incorrectly specified for all $t$.

The second simulation scenario (Simulation 2) is based on a 5 time-point longitudinal structure, i.e, $K=5$, with $n=5000$. 
The types of regressions considered in this simulation include the four
scenarios from Simulation 1 (\emph{Qc.gc,} \emph{Qi.gc}, \emph{Qc.gi} and
\emph{Qi.gi)}, except that the estimation of $Q_{t}$ is based on non-parametric
regression approaches (details below). We define the ``correct''
estimation scenario for $Q_{t}$ (i.e, \emph{ Qc.gc} and \emph{Qc.gi)} by including all the relevant time-varying and baseline covariates,
whereas the ``incorrect'' estimation scenario
for $Q_{t}$ means that we exclude key time-varying
or baseline covariates. We estimate each $\pi_{t}$
via the main-terms logistic regression. The incorrect estimator $\hat{\pi}_{t}$
of $\pi_{t}$ is obtained by running an intercept-only logistic regression.
We also consider an additional scenario (\emph{QSDR.gSDR}), where
$Q_{5}$ is incorrect, while $Q_{t}$ are correct for all $t<5$, and,
conversely, $\pi_{5}$ is correct, while $\pi_{t}$ are incorrect for all
$t<5$. This scenario mimics data for which the last outcome regression is a
high-dimensional and biologically complex mechanism and is unlikely to be correctly specified, while the exposure mechanism at the last time-point is known.

For Simulation 2, the non-parametric estimation of $Q_{t}$ is based on
a discrete super-learner \citep{vanderLaan&Polley&Hubbard07}.
The ensemble library of candidate learners includes
18 estimators from $\mathtt{xgboost}$ \texttt{R} package \citep{Chen&Guestrin2016}, 
as well as a main-terms logistic
regression (GLM). The best performing model in the ensemble is selected
via 5-fold cross-validation. We found that using the ensemble of highly
data-adaptive $\mathtt{xgboost}$ learners for all $Q_{t}$ 
was prone to overfitting. To mitigate this overfitting we
employ $\mathtt{xgboost}$-based learners only for estimating $Q_{5}$
and $Q_{4}$, and use the GLM for estimating $Q_{3},Q_{2}$
and $Q_{1}$.

In both simulation scenarios, the iTMLE targeting steps 
are
based on super-learner ensembles that include
3 GBMs from $\mathtt{xgboost}$ \texttt{R} package, a main
terms logistic regression, a univariate intercept-only logistic regression
and an empty learner that does not update $\hat{Q}_{t}$. The
targeted iTMLE update is then defined by the convex combination of predictions
from each learner in the super-learner ensemble, where this combination is fitted using the novel cross-validation
scheme presented in Appendix~\ref{app:cv}. 

The regression specification for DR Transform relies on the same
estimation approaches as for $Q_{t}$ in Simulation 1 and 2.
However, since the transformed estimates $\hat{\Gamma}_{t,i}$ often result
in some values being outside of $(0,1)$, the standard statistical \texttt{R}
software, such as GLM, produces an error. To overcome this, we
modified the \texttt{R} package $\mathtt{xgboost}$ to produce valid regression
estimates with DR transformed outcomes, even if
they fall outside of $(0,1)$.

\vspace{6pt}\noindent\textit{Simulation results.} The simulation results for the relative MSE estimation of $Q_{1}$ for
Simulation 1 and 2 are presented in Figure~\ref{fig:simres.all.MSE}. These
results clearly demonstrate that, depending on the scenario, the iTMLE and DR Transform either outperform
or perform comparably to Direct Plugin and LTMLE. This figure also shows that the iTMLE outperforms DR transform in terms of MSE in Simulation 1 for \emph{Qi.gc} and \emph{Qi.gi}, i.e. the two settings where the later time point outcome regressions are misspecified. The Simulation
1 and 2 results for the relative absolute bias in estimation of $Q_{0}$ are presented in Figure~\ref{fig:simres.all.BIAS}. While overall
the performance the LTMLE, iTMLE, and DR Transform is similar across different
scenarios, the notable exceptions are the scenario \emph{Qi.gc} in Simulation
1, where DR Transform appears to outperform other methods, the scenario
\emph{Qc.gi} in Simulation 2, where DR Transform outperforms the rest,
and the scenario \emph{QSDR.gSDR} in Simulation 2, where both SDR methods
outperform the LTMLE. The simulation results for the coverage and mean length of the two-sided 95\% CIs for $Q_{0}$ in Simulation 1 and 2 are presented in Figure~\ref{fig:simres.all.coverCIlen}.
The confidence interval coverage and width appear to be comparable between
the two SDR methods and the LTMLE. The only exception is for the \emph{QSDR.gSDR}
scenario, where the LTMLE has roughly 10\% coverage while the SDR approaches 
achieve nearly the nominal coverage level similar mean confidence interval widths.

\vspace{2pt}{\centering
[Figures~\ref{fig:simres.all.MSE}, \ref{fig:simres.all.BIAS}, and \ref{fig:simres.all.coverCIlen} about here.]\par
}\vspace{2pt}

\section{Discussion}\label{sec:disc}
We have discussed the sequentially doubly robust estimation of the longitudinal G-formula. This form of robustness allows inconsistent estimation of either the treatment/censoring mechanism or outcome regression at each time point. We presented a general SDR estimation strategy, referred to as the iTMLE. This procedure is iterative, leveraging the SDR property from temporally subsequent outcome regressions to ensure the SDR property at the current outcome regression. We presented a high-level argument supporting the SDR nature of a general iTMLE, and formally established that our estimation scheme is SDR when implemented with empirical risk minimization. In practice, we believe the ERM procedure is prone to overfitting, and suggest using the cross-validation selector presented in the Supplement. Beyond the added robustness of our new estimator, we argued why the iTMLE more appropriately accounts for the dimension of the outcome regression problem than typical sequential regression procedures, thereby leading to better finite sample behavior.

To improve the readability of this manuscript, we focused on outcomes bounded in $[0,1]$, and, as a consequence, all regressions were based upon a cross-entropy loss function. Like targeted minimum-loss based estimation, this method immediately extends beyond binary outcomes by choosing a different loss function. To simplify the presentation of our technical analysis of the ERM special case, in Section~\ref{sec:erm}, we also assumed that the outcome regressions were bounded away from zero and one. While time-to-event outcomes, discussed in Appendix~\ref{app:rtcens}, may seem to be excluded by this assumption, this is only an artificial restriction. The theory presented can be shown to remain valid even without this condition.

We expect our method to enable the construction of confidence sets and bands for time $t$ outcome regressions that shrink at the rate dictated by the entropy of the class used to estimate outcome regression $t$ rather than the entropy of the largest class used to estimate outcome regressions $s\ge t$. We will explore this in future work.

\vspace{6pt}\noindent\textbf{Acknowledgements.}
This work was partially supported by the National Institute of Allergy and Infectious Disease at the National Institutes of Health under award UM1 AI068635.

\bibliographystyle{Chicago}
\bibliography{persrule}

\section{Proofs}\label{app:proofs}
\begin{proof}[Proof of Lemma~\ref{lem:firstordexp}]
Note that
\begin{align*}
\hat{Q}_t^t(\bar{h}_t)-Q_t(\bar{h}_t)= -\E\left[\frac{A_t}{\pi_t}\left\{\hat{Q}_{t+1}-\hat{Q}_t\right\}\middle|\bar{h}_t\right] + \E\left[\hat{Q}_{t+1} - Q_{t+1}\middle|\bar{h}_t,A_t=1\right].
\end{align*}
The leading term is $\textnormal{D}_t^t(\bar{h}_t)$. To see that the latter term equals $\sum_{s=t+1}^K [\textnormal{D}_{s}^t(\bar{h}_t) + \Rem_{s}(\bar{h}_t)]$, recursively (from $s=t$ to $s=K-1$) apply the following relationship to the inner expectation in the final term of
\begin{align*}
\E\left[\hat{Q}_{s+1} - Q_{s+1}\middle|\bar{h}_{s},a_{s}\right]
=\,& -\E\left[\frac{A_{s+1}}{\hat{\pi}_{s+1}}\left\{\hat{Q}_{s+2}-\hat{Q}_{s+1}\right\}\middle|\bar{h}_{s},a_{s}\right] \\
&+ \E\left[\left\{1-\frac{\pi_{s+1}}{\hat{\pi}_{s+1}}\right\}\left\{\hat{Q}_{s+1} - Q_{s+1}\right\}\middle|\bar{h}_{s},a_{s}\right] \\
&+\E\left[\frac{A_{s+1}}{\hat{\pi}_{s+1}}\E\left[\hat{Q}_{s+2} - Q_{s+2}\middle|\bar{H}_{s+1},A_{s+1}\right]\middle|\bar{h}_{s},a_{s}\right],
\end{align*}
where the recursion ends at $s=K-1$ because $\hat{Q}_{K+1}=Q_{K+1}$.
\end{proof}

\begin{proof}[Proof of Theorem~\ref{thm:excessriskbd}]
Fix $\bar{h}_t$. Define
\begin{align*}
G_{\bar{h}_t}(\varepsilon)&\equiv \E\left[\mathscr{L}_{s}^t(\bar{H}_{s+1};\bar{h}_t'\mapsto \varepsilon)\middle|\bar{h}_t\right].
\end{align*}
Let $\dot{G}_{\bar{h}_t}(\varepsilon)\equiv \frac{\partial}{\partial \varepsilon} G_{\bar{h}_t}(\varepsilon)$. The chain rule shows that $\frac{\partial [\dot{G}_{\bar{h}_t}(\varepsilon)]^2}{\partial G_{\bar{h}_t}(\varepsilon)} = 2\ddot{G}_{\bar{h}_t}(\varepsilon)$, where $\ddot{G}_{\bar{h}_t}(\varepsilon)\equiv \frac{\partial}{\partial\varepsilon} \dot{G}_{\bar{h}_t}(\varepsilon)$. By the mean value theorem, there exists a $c$ in the range of $2\ddot{G}_{\bar{h}_t}(\cdot)$ such that
\begin{align*}
\dot{G}_{\bar{h}_t}[\hat{\epsilon}_{s}^t(\bar{h}_t)]^2 - \dot{G}_{\bar{h}_t}[\bar{\epsilon}_{s}^t(\bar{h}_t)]^2&= c\left\{G_{\bar{h}_t}[\hat{\epsilon}_{s}^t(\bar{h}_t)]-G_{\bar{h}_t}[\bar{\epsilon}_{s}^t(\bar{h}_t)]\right\}
\end{align*}
As $\bar{\epsilon}_{s}^t$ is the risk minimizer over all functions $\bar{\epsilon}_{s}^t : \bar{\mathcal{H}}_t\rightarrow\mathbb{R}$, $\dot{G}_{\bar{h}_t}[\bar{\epsilon}_{s}^t(\bar{h}_t)] = 0$. Straightforward calculations show that
\begin{align*}
2\ddot{G}_{\bar{h}_t}(\varepsilon) = 2a_t\left\{\prod_{r=t+1}^{s} \frac{A_{r}}{\hat{\pi}_{r}}\right\} \hat{Q}_{s}^{t+1,\bar{h}_t'\mapsto \varepsilon}(1-\hat{Q}_{s}^{t+1,\bar{h}_t'\mapsto \varepsilon}),
\end{align*}
which is uniformly upper bounded by $\delta^{s-t}/2$. Furthermore, because $\bar{\epsilon}_{s}^t$ is a risk minimizer, $G_{\bar{h}_t}[\hat{\epsilon}_{s}^t(\bar{h}_t)]-G_{\bar{h}_t}[\bar{\epsilon}_{s}^t(\bar{h}_t)]\ge 0$. Further note that $\dot{G}_{\bar{h}_t}[\hat{\epsilon}_{s}^t(\bar{h}_t)]^2 = \pi_t(\bar{h}_t)^2 \textnormal{D}_{s}^t(\hat{Q}_{s+1}^*,\hat{Q}_{s}^t)(\bar{h}_t)^2$, and also that $\textnormal{D}_{s}^t(\hat{Q}_{s+1}^*,\hat{Q}_{s}^t)(\bar{h}_t)^2\lesssim \textnormal{D}_{s}^t(\hat{Q}_{s+1}^*,\hat{Q}_{s}^t)(\bar{h}_t)^2$. Hence, we have shown that
\begin{align*}
\textnormal{D}_{s}^t(\hat{Q}_{s+1}^*,\hat{Q}_{s}^t)(\bar{h}_t)^2&\lesssim \left\{G_{\bar{h}_t}[\hat{\epsilon}_{s}^t(\bar{h}_t)]-G_{\bar{h}_t}[\bar{\epsilon}_{s}^t(\bar{h}_t)]\right\} = \ER_{s}^t(\bar{h}_t).
\end{align*}
This yield the almost sure pointwise bound. Take an expectation over $\bar{H}_t\sim P$ on both sides, taking the square root, and applying the triangle inequality shows that
\begin{align*}
\norm{\textnormal{D}_{s}^{t,*}}&\le \norm{\textnormal{D}_{s}^t(\hat{Q}_{s+1}^*,\hat{Q}_{s}^t)} + \norm{\textnormal{D}_{s}^{t,*}-\textnormal{D}_{s}^t(\hat{Q}_{s+1}^*,\hat{Q}_{s}^t)} \\
&\lesssim \E\left[\ER_{s}^t(\bar{h}_t)\right]^{1/2} + \norm{\textnormal{D}_{s}^{t,*}-\textnormal{D}_{s}^t(\hat{Q}_{s+1}^*,\hat{Q}_{s}^t)}.
\end{align*}
The positivity assumption shows that the latter term upper bounds by a positive constant (relying on $\delta$ only) times $\norm{\hat{Q}_{s}^*-\hat{Q}_{s}^t}$. 
\end{proof}

\begin{lemma}[Targeting $\hat{Q}_{s}^{t+1}$ makes the excess risk small]\label{lem:targQkp}
Fix $t$. Under the conditions of Theorem~\ref{thm:erm},
\begin{align*}
\E[\mathcal{E}_{s}^t(\bar{H}_t)] = o_P(n^{-1/2})\textnormal{ for each $s\ge t$.}
\end{align*}
\end{lemma}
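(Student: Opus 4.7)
The plan is to run the classical empirical-process analysis of ERM over a Donsker class, with some bookkeeping for the fact that $\mathscr{L}_s^t$ carries data-dependent nuisances ($\hat{Q}_{s+1}^*$, $\hat{Q}_s^{t+1}$, and the $\hat{\pi}_r$). First I would exploit CS.$t$ and the ERM characterization: on the event $\bar{\epsilon}_s^t\in\mathcal{F}^t$, which by CS.$t$ has probability approaching one, $P_n\mathscr{L}_s^t(\cdot;\hat{\epsilon}_s^t)\le P_n\mathscr{L}_s^t(\cdot;\bar{\epsilon}_s^t)$, so that
\begin{align*}
\E[\mathcal{E}_s^t(\bar{H}_t)]&=P[\mathscr{L}_s^t(\cdot;\hat{\epsilon}_s^t)-\mathscr{L}_s^t(\cdot;\bar{\epsilon}_s^t)] \\
&\le (P_n-P)[\mathscr{L}_s^t(\cdot;\bar{\epsilon}_s^t)-\mathscr{L}_s^t(\cdot;\hat{\epsilon}_s^t)].
\end{align*}
It then suffices to control the empirical-process term on the right at rate $o_P(n^{-1/2})$.

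Next I would verify that the family $\mathcal{G}\equiv\{\bar{h}_{s+1}\mapsto\mathscr{L}_s^t(\bar{h}_{s+1};\epsilon):\epsilon\in\mathcal{F}^t,\ \hat{\pi}_r\in\mathcal{D}_r\}$ is $P$-Donsker with bounded envelope. This is a routine permanence calculation: SP.$t^0$ keeps the inverse-weight products bounded; BD combined with the expit map keeps $\hat{Q}_s^{t+1,\epsilon}$ bounded inside $(0,1)$; the cross-entropy evaluated at such an argument is Lipschitz in its functional slot; sums, products, and Lipschitz compositions preserve the Donsker property \citep{vanderVaartWellner1996}.

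Finally I would show that $\norm{\hat{\epsilon}_s^t-\bar{\epsilon}_s^t}=o_P(1)$ and close via asymptotic equicontinuity. Glivenko-Cantelli for $\mathcal{G}$ (an immediate consequence of the Donsker property under measurability), combined with Step~1, gives $\E[\mathcal{E}_s^t(\bar{H}_t)]=o_P(1)$. A pointwise quadratic lower bound on the curvature of $G_{\bar{h}_t}$ at $\bar{\epsilon}_s^t(\bar{h}_t)$--dual to the upper bound exploited in the proof of Theorem~\ref{thm:excessriskbd}, and available because $\hat{Q}_s^{t+1,\varepsilon}(1-\hat{Q}_s^{t+1,\varepsilon})$ is bounded away from zero when $\hat{Q}_s^{t+1}$ is bounded away from zero and one--then converts the vanishing excess risk into $\norm{\hat{\epsilon}_s^t-\bar{\epsilon}_s^t}=o_P(1)$. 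Asymptotic equicontinuity of the empirical process indexed by $\mathcal{G}$ now delivers $(P_n-P)[\mathscr{L}_s^t(\cdot;\bar{\epsilon}_s^t)-\mathscr{L}_s^t(\cdot;\hat{\epsilon}_s^t)]=o_P(n^{-1/2})$, as required. The main obstacle is this uniform curvature lower bound: it forces $\hat{Q}_s^{t+1}$ to stay away from $\{0,1\}$ uniformly in $\bar{h}_s$, which is the boundedness restriction that the paper's discussion flags as technically convenient but not essential.
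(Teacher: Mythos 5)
Your proposal is correct and follows essentially the same route as the paper's proof: the ERM basic inequality under CS.$t$, Donsker permanence to get the centered empirical-process term down to $O_P(n^{-1/2})$, and asymptotic equicontinuity to upgrade this to $o_P(n^{-1/2})$. The only difference is cosmetic and concerns how the $L^2(P)$-smallness needed for equicontinuity is verified: the paper invokes the quadratic property of the cross-entropy loss to bound $P[\{\mathscr{L}_{s}^t(\cdot;\hat{\epsilon}_{s}^t)-\mathscr{L}_{s}^t(\cdot;\bar{\epsilon}_{s}^t)\}^2]$ directly by the excess risk, whereas you pass through a curvature lower bound on the conditional risk to get $\norm{\hat{\epsilon}_{s}^t-\bar{\epsilon}_{s}^t}=o_P(1)$ and then use the Lipschitz property of the loss --- two readings of the same second-order expansion, both resting on the same boundedness of $\hat{Q}_{s}^{t+1,\epsilon}$ away from $\{0,1\}$ that you correctly flag.
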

\begin{proof}[Proof of Lemma~\ref{lem:targQkp}]
We use empirical process notation so that, for a distribution $\nu$ and function $f$, $\nu f=\E_{\nu}[f(Z)]$. As $\hat{\epsilon}_{s}^t$ is an empirical risk minimizer, \ref{it:cs} implies that $P_n \mathscr{L}_{s}^t(\cdot\,;\hat{\epsilon}_{s}^t)\le P_n\mathscr{L}_{s}^t(\cdot\,;\bar{\epsilon}_{s}^t)$. Hence,
\begin{align*}
\E[\mathcal{E}_{s}^t(\bar{H}_t)]&= P\left[\mathscr{L}_{s}^t(\cdot;\hat{\epsilon}_{s}^t)-\mathscr{L}_{s}^t(\cdot;\bar{\epsilon}_{s}^t)\right]\le (P_n-P)\left[\mathscr{L}_{s}^t(\cdot\,;\bar{\epsilon}_{s}^t)-\mathscr{L}_{s}^t(\cdot\,;\hat{\epsilon}_{s}^t)\right].
\end{align*}
The remainder of the proof shows that the right-hand side is $o_P(n^{-1/2})$. By \ref{it:bd}, \ref{it:cs}, and permanence properties of Donsker classes \citep[e.g., Chapter 2.10 in][]{vanderVaartWellner1996}, the right-hand side is $O_P(n^{-1/2})$. Using the bounds on $\hat{Q}_{s}^{t+1}$, $s>t$, \ref{it:bd}, and \ref{it:sp}, standard arguments used to show that the cross-entropy loss is quadratic (see, e.g., Lemma~2 in \citealp{vanderLaan&Dudoit&Keles04}) show that
\begin{align}
P\left[\left\{\mathscr{L}_{s}^t(\cdot;\hat{\epsilon}_{s}^t)-\mathscr{L}_{s}^t(\cdot;\bar{\epsilon}_{s}^t)\right\}^2\right]\lesssim P\left[\mathscr{L}_{s}^t(\cdot;\hat{\epsilon}_{s}^t)-\mathscr{L}_{s}^t(\cdot;\bar{\epsilon}_{s}^t)\right]. \label{eq:qlquad}
\end{align}
We have already shown that the left-hand side is $O_P(n^{-1/2})$. Combining this with DC, permanence properties of Donsker classes, and the asymptotic equicontinuity of Donsker classes \citep[e.g., Lemma 19.24 in][]{vanderVaart98}, $(P_n-P)\left[\mathscr{L}_{s}^t(\cdot\,;\bar{\epsilon}_{s}^t)-\mathscr{L}_{s}^t(\cdot\,;\hat{\epsilon}_{s}^t)\right]=o_P(n^{-1/2})$.
\end{proof}

\begin{proof}[Proof of Theorem~\ref{thm:preserverate}]
Let $t,s$ satisfy the conditions of the theorem. We give proof by induction on $r=t,t-1,\ldots,t^0$. The inductive hypothesis at $r=t^0$ includes our desired result that $\norm{\hat{Q}_{s}^*-\hat{Q}_{s}^t} = o_P(n^{-1/4})$.\\
\textbf{Induction Hypothesis: IH($r$).}  $\norm{\hat{Q}_{s}^t-\hat{Q}_{s}^{r}}=o_P(n^{-1/4})$ and $\E[\mathcal{E}_{s}^{r}(\bar{H}_{r})]=o_P(n^{-1/2})$.\\
\textbf{Base Case:} $r=t$. $\norm{\hat{Q}_{s}^t-\hat{Q}_{s}^{r}}=0$, so is $o_P(n^{-1/4})$ with much to spare. Lemma~\ref{lem:targQkp} shows that CS.$t$ plus the other conditions of this theorem imply that $\E[\mathcal{E}_{s}^{r}(\bar{H}_{r})]=o_P(n^{-1/2})$.\\
\textbf{Induction Step:} Suppose IH($r+1$) holds. By the triangle inequality,
\begin{align*}
\norm{\hat{Q}_{s}^t-\hat{Q}_{s}^{r}}&\le \norm{\hat{Q}_{s}^t-\hat{Q}_{s}^{r+1}} + \norm{\hat{Q}_{s}^{r+1}-\hat{Q}_{s}^{r}}.
\end{align*}
By IH($r+1$), the leading term above is $o_P(n^{-1/4})$. In the remainder we establish that $\norm{\hat{Q}_{s}^{r+1}-\hat{Q}_{s}^{r}}=o_P(n^{-1/4})$, and along the way we also establish that $\E[\mathcal{E}_{s}^{r}(\bar{H}_{r})]=o_P(n^{-1/2})$. By the Lipschitz property of the expit function, $\norm{\hat{Q}_{s}^{r+1}-\hat{Q}_{s}^{r}}\lesssim \norm{\hat{\epsilon}_{s}^{r}}$, and thus it suffices to bound $\norm{\hat{\epsilon}_{s}^{r}}$ to establish the first part of IH($r$).

We start by giving a useful upper bound for $\norm{\epsilon-\bar{\epsilon}_{s}^{r}}$ for a general function $\epsilon : \bar{\mathcal{H}}_{r}\rightarrow\mathbb{R}$ that falls in $L^2(P)$. Because $\bar{\epsilon}_{s}^{r}$ is a risk minimizer over all functions mapping from $\bar{\mathcal{H}}_{r}$ to $\mathbb{R}$, a $\bar{H}_{r}$-pointwise second-order Taylor expansion shows that, for some $\tilde{\epsilon} : \bar{\mathcal{H}}_{r}\rightarrow\mathbb{R}$ that falls in between $\epsilon$ and $\bar{\epsilon}_{s}^{r}$,
\begin{align*}
\E&\left[\mathscr{L}_{s}^{r}(\bar{H}_{s+1};\epsilon)-\mathscr{L}_{s}^{r}(\bar{H}_{s+1};\bar{\epsilon}_{s}^{r})\right] \\
&= \E\left[\E\left[\mathscr{L}_{s}^{r}(\bar{H}_{s+1};\epsilon)-\mathscr{L}_{s}^{r}(\bar{H}_{s+1};\bar{\epsilon}_{s}^{r})\middle|\bar{H}_{r}\right]\right] \\
&= \frac{1}{2}\E\left[\{\epsilon(\bar{H}_{r})-\bar{\epsilon}_{s}^{r}(\bar{H}_{r})\}^2 \E\left[A_{r}\left(\prod_{r'=r+1}^{s}\frac{A_{r'}}{\hat{\pi}_{r'}}\right)\hat{Q}_{s}^{r+1,\tilde{\epsilon}}(1-\hat{Q}_{s}^{r+1,\tilde{\epsilon}})\middle|\bar{H}_{r}\right]\right] \\
&\ge c\norm{\epsilon-\bar{\epsilon}_{s}^{r}}^2
\end{align*}
for an appropriately specified constant $c>0$, where we used BD. The triangle inequality and two applications of the preceding display (at $\epsilon=0$ and $\epsilon=\hat{\epsilon}_{s}^{r}$) show that
\begin{align*}
\norm{\hat{\epsilon}_{s}^{r}}\le \norm{\bar{\epsilon}_{s}^{r}} + \norm{\hat{\epsilon}_{s}^{r}-\bar{\epsilon}_{s}^{r}}\lesssim \E\left[\mathscr{L}_{s}^{r}(\bar{H}_{s+1};0)-\mathscr{L}_{s}^{r}(\bar{H}_{s+1};\bar{\epsilon}_{s}^{r})\right]^{1/2} + \E[\mathcal{E}_{s}^{r}(\bar{H}_{r})]^{1/2}.
\end{align*}
The square of the latter term upper bounds as follows:
\begin{align}
\E[\mathcal{E}_{s}^{r}(\bar{H}_{r})]&= P\left[\mathscr{L}_{s}^{r}(\cdot;\hat{\epsilon}_{s}^{r})-\mathscr{L}_{s}^{r}(\cdot;0)\right] + P\left[\mathscr{L}_{s}^{r}(\cdot;0)-\mathscr{L}_{s}^{r}(\cdot;\bar{\epsilon}_{s}^{r})\right] \nonumber \\
&\le P\left[\mathscr{L}_{s}^{r}(\cdot;\hat{\epsilon}_{s}^{r})-\mathscr{L}_{s}^{r}(\cdot;0)\right] - (P_n-P)\left[\mathscr{L}_{s}^{r}(\cdot;0)-\mathscr{L}_{s}^{r}(\cdot;\bar{\epsilon}_{s}^{r})\right] \nonumber \\
&= P\left[\mathscr{L}_{s}^{r}(\cdot;\hat{\epsilon}_{s}^{r})-\mathscr{L}_{s}^{r}(\cdot;0)\right] + \left|(P_n-P)\left[\mathscr{L}_{s}^{r}(\cdot;0)-\mathscr{L}_{s}^{r}(\cdot;\bar{\epsilon}_{s}^{r})\right]\right|, \label{eq:excessriskinduction}
\end{align}
where the inequality uses that the constant function zero is in $\mathcal{F}^{r}$ and the latter equality uses that $\bar{\epsilon}_{s}^{r}$ is the true risk minimizer and $\hat{\epsilon}_{s}^{r}$ is the empirical risk minimizer. The subadditivity of $x\mapsto x^{1/2}$ thus yields that
\begin{align*}
\norm{\hat{\epsilon}_{s}^{r}}\lesssim\,& \E\left[\mathscr{L}_{s}^{r}(\bar{H}_{s+1};0)-\mathscr{L}_{s}^{r}(\bar{H}_{s+1};\bar{\epsilon}_{s}^{r})\right]^{1/2} + \left|(P_n-P)\left[\mathscr{L}_{s}^{r}(\cdot;0)-\mathscr{L}_{s}^{r}(\cdot;\bar{\epsilon}_{s}^{r})\right]\right|^{1/2}.
\end{align*}
The square of the first term above upper bounds as follows:
\begin{align}
\E&\left[\mathscr{L}_{s}^{r}(\bar{H}_{s+1};0)-\mathscr{L}_{s}^{r}(\bar{H}_{s+1};\bar{\epsilon}_{s}^{r})\right] \nonumber \\
&= \E\left[\frac{A_{r}}{\hat{\pi}_{r+1}}\E\left[\mathscr{L}_{s}^{r+1}(\bar{H}_{s+1};\hat{\epsilon}_{s}^{r+1})-\mathscr{L}_{s}^{r+1}(\bar{H}_{s+1};\hat{\epsilon}_{s}^{r+1} + \bar{\epsilon}_{s}^{r})\middle|\bar{H}_{r+1}\right]\right] \nonumber \\
&\lesssim \E\left[\frac{A_{r}}{\hat{\pi}_{r+1}}\E\left[\mathscr{L}_{s}^{r+1}(\bar{H}_{s+1};\hat{\epsilon}_{s}^{r+1})-\mathscr{L}_{s}^{r+1}(\bar{H}_{s+1};\bar{\epsilon}_{s}^{r+1})\middle|\bar{H}_{r+1}\right]\right] \nonumber \\
&\lesssim \E[\mathcal{E}_{s}^{r+1}(\bar{H}_{r})]. \label{eq:riskub}
\end{align}
The equality is an algebraic identity, the first inequality uses that $\bar{\epsilon}_{r}^{t+1}$ is the risk minimizer among all functions mapping from $\bar{\mathcal{H}}_{r+1}\rightarrow\mathbb{R}$, and the second inequality uses SP.$t^0$. By IH($r+1$), the right-hand side is $o_P(n^{-1/2})$. Returning to the preceding display,
\begin{align*}
\norm{\hat{\epsilon}_{s}^{r}}\lesssim\,& \left|(P_n-P)\left[\mathscr{L}_{s}^{r}(\cdot;0)-\mathscr{L}_{s}^{r}(\cdot;\bar{\epsilon}_{s}^{r})\right]\right|^{1/2} + o_P(n^{-1/4}).
\end{align*}
By DC, BD, SP.$t^0$, and Lemma~19.24 of \cite{vanderVaart98}, the former term on the right satisfies
\begin{align}
(P_n-P)\left[\mathscr{L}_{r}^t(\cdot;\hat{\epsilon}_{r}^t)-\mathscr{L}_{r}^t(\cdot;0)\right]&=\begin{cases}
o_P(n^{-1/2}),&\mbox{ if }\norm{\hat{\epsilon}_{s}^{r}}=o_P(1), \\
O_P(n^{-1/2}),&\mbox{ otherwise.}
\end{cases}\label{eq:Donskercases}
\end{align}
A first application of the above result shows that $\norm{\hat{\epsilon}_{s}^{r}}=O_P(n^{-1/4})$. A second application shows that $\norm{\hat{\epsilon}_{s}^{r}}=o_P(n^{-1/4})$. Recall that $\norm{\hat{Q}_{s}^{r+1}-\hat{Q}_{s}^{r}}\lesssim \norm{\hat{\epsilon}_{s}^{r}}$, thereby establishing the first part of IH($r$). For the second part of IH($r$), note that it suffices to bound the two terms on the right-hand side of \eqref{eq:excessriskinduction}. The first term is controlled using that the right-hand side of \eqref{eq:riskub} is $o_P(n^{-1/2})$ under IH($r+1$). The second term is $o_P(n^{-1/2})$ by the above since $\norm{\hat{\epsilon}_{s}^{r}}=o_P(n^{-1/4})$. Hence, we have established the second part of IH($r$), namely that $\E[\mathcal{E}_{s}^{r}(\bar{H}_{r})]=o_P(n^{-1/2})$.
\end{proof}

\section{Right-censored data structures and time-to-event outcomes}\label{app:rtcens}
\begin{remark}[Right-censored data structures]\label{rem:rc}
General discretely right-censored data structures can be expressed using our notation. In what follows we mimic the introduction to discretely right-censored data structures given in \cite{Bang&Robins05}. For each $t$, let $\bar{L}_t\equiv (L_1,\ldots,L_t)$. Let $C$ be a discrete censoring time taking value in $1,\ldots,K+1$. The observation is censored after time $C$, so that we observe $(C,\bar{L}_C)$. Under the missing at random assumption, $P(C=t|C\ge t,\bar{L}_{K+1})=P(C=t|C\ge t,\bar{L}_t)$. If one wishes to estimate $\E[L_{K+1}|\bar{L}_t]$ for some $t\le K$, then one can use that, under the missing at random assumption, this estimand is equal to $\tilde{Q}_t(\bar{L}_t)$, where $\tilde{Q}_{s}$, $s\ge t$, is recursively defined as $\tilde{Q}_{K+1}(\bar{L}_{K+1})\equiv \bar{L}_{K+1}$, and $\tilde{Q}_{s}(\bar{L}_{s})\equiv \E[\tilde{Q}_{s+1}(\bar{L}_{s+1})|C> s,\bar{L}_{s}]$. To see the equivalence with our data structure, let $A_t\equiv \Ind_{\{C> t\}}$. Then $\E[L_{K+1}|\bar{\ell}_t]=\hat{Q}_t(\bar{h}_t)$, where $\bar{h}_t$ is the history vector with time $s\le t$ covariates $\ell_{s}$ and time $s<t$ treatment $A_{s}=1$.
\end{remark}

\begin{remark}[Time-to-event outcomes]\label{rem:tte}
Let $C$ be a censoring time taking values in $1,\ldots,K,+\infty$, let $T$ be a survival time taking values in $1,\ldots,K+1,+\infty$, and $\bar{L}_T\equiv (L_1,\ldots,L_T)$ denote a vector of covariates up to the survival time, where each $L_t$, $t\le K$, contains an indicator that $T\le t$ and $L_{K+1}=\Ind_{\{T\le K+1\}}$. We wish to estimate $P(T\le K+1|\bar{\ell}_t)$. One way to express the observed data structure is to write $(Y,\Delta,\bar{L}_{Y})$, where $Y\equiv \min\{T,C\}$ and $\Delta\equiv \Ind_{\{T\le C\}}$. Alternatively, the observed data structure can be expressed using our $\bar{H}_{K+1}$ notation, where each $A_t\equiv \Ind_{\{Y> t\}\cup\{\Delta=1\}}$. Under the sequential randomization assumption that $P(C=t|C\ge t,T> t,\bar{L}_T)=P(C=t|C\ge t,T> t,\bar{L}_t)$ with probability one, one can show that $P(T\le K+1|\bar{\ell}_t)$ is equal to $Q_t(\bar{h}_t)$, where each $a_{s}$, $s<t$, in $\bar{h}_t$ is equal to one. Working with time-to-event data requires one additional consideration compared to the longitudinal treatment setting that we consider in the main text. In particular, once the indicator that $T\le t$ in $\bar{L}_t$ is equal to one, it is automatically true that $L_{K+1}=1$. Thus, one should deterministically set estimates of $Q_t(\bar{L}_t)$ equal to one for all such $\bar{L}_t$.
\end{remark}

\section{Variation-independent formulation of sequential double robustness}\label{app:varindep}
We now present a variation independent formulation of sequential double robustness and establish its achievability. This formulation is more restrictive than that given in Definition~\ref{def:SDRQk}, but makes clear that there are scenarios in which one could correctly specify OR.$s$ but not OR.$t$, $s>t$. For each $t\ge 1$, let $P_t$ denote the distribution of $L_{t+1}$ conditional on $A_t=1,\bar{H}_t$ that is implied by $P$. Consider an estimation procedure that estimates each $P_t$ separately, where we note that these $P_t$ are variation independent, both of one another and of the treatment mechanisms, in the sense that knowing $P_t$ places no restriction on the set of possible realizations of $P_{s}$, $s\not=t$, or of $\pi_{s}$, $s$ arbitrary. Thus, so our procedure can estimate all of these conditional distributions \textit{a priori}. Define the following alternative to OR.$t$:
\begin{enumerate}
  \item[\mylabel{it:fd}{FD.$t$)}]  The distribution $P_t$ is correctly specified by the estimation procedure, or at least arbitrarily well approximated asymptotically.
\end{enumerate}
Above ``FD'' refers to correct specification of a component of the Full Data distribution. Consider an alternative definition of sequential double robustness that replaces OR.$t$ and OR.$s$ in Definition~\ref{def:SDRQk} by FD.$t$ and FD.$s$. First note that, by recursive applications of this definition, from $t=K,\ldots,0$, we see that a procedure satisfying this alternative SDR definition implicitly correctly specifies the functional form of the time $t$ outcome regression (i.e., satisfies OR.$t$) at each time point for which FD.$t$ holds and, for each $s>t$, either FD.$t$ or TM.$t$ holds. Secondly, an estimator achieving this alternative SDR property is achievable using the (S)DR unbiased transformation presented in Section~\ref{sec:existing}, where the regressions are fitted via kernel regression. Here we used the fact that kernel regression represents a kernel density estimation based plug-in estimator for the regression function. The downside to this procedure is that it requires correctly estimating possibly high-dimensional conditional densities. We therefore prefer an alternative approach that allows us to incorporate modern regression techniques -- see Section~\ref{sec:newproc}. Nonetheless, the variation independence of the procedure discussed in this appendix: FD.$t$ holding does not logically imply that FD.$s$ holds for $t\not=s$, just as TM.$t$ does not logically imply that TM.$s$ holds for $t\not=s$.

\section{Mitigating overfitting via cross-validation}\label{app:cv}
In this section, we describe a variant of $V$-fold cross-validation that can be used to estimate each $Q_t$. Let $\mathcal{S}^1,\ldots,\mathcal{S}^V$ be mutually exclusive and exhaustive subsets of $\{1,\ldots,n\}$ of approximately equal size, determined independently of the observations $\bar{H}_{K+1,1},\ldots,\bar{H}_{K+1,n}$. The set $\mathcal{S}^v$ is referred to as validation set $v$, and its complement is referred to as training set $v$. For each $i$, let $v_i$ denote the validation set $\mathcal{S}^{v_i}$ to which $i$ belongs, i.e. the validation set for observation $i$.

Suppose we wish to estimate some parameter $\theta$ of an arbitrary distribution $\nu$ on $Z\equiv (X,Y)$, e.g. $\theta(x)=\E_{\nu}[Y|x]$. Suppose that we have $M$ candidate regression algorithms for estimating $\theta$. For each candidate algorithm $m$ and subset indicator $v$, let $\hat{\theta}^{v,m}$ denote an estimate of $\theta$ by running algorithm $m$ on observations in training set $v$. The cross-validation selector for $m$ is defined as
\begin{align*}
\hat{m}&= \min_m \sum_{i=1}^n \mathscr{L}(Z;\hat{\theta}^{v_i,m})
\end{align*}
for some appropriately defined loss $\mathscr{L}$. If the loss function depends on nuisance parameters that must be estimated from the data, then one can replace this loss by a loss with the nuisance parameter estimates obtained from training set $v_i$, up to the dependence of these parameters on their own cross-validation selectors of an algorithm $m$. We let $\hat{\theta}^{v_i}\equiv \hat{\theta}^{v_i,\hat{m}}$. Note that $\hat{\theta}^{v_i}$ depends on the data in $v_i$ only through the selected $\hat{m}$ and, if the loss function depends on nuisance parameters, then also through the cross-validation-selected candidate algorithms from these nuisance parameters. As we will show in a future work, the fact that $\hat{\theta}^{v_i}$ only depends on validation set $i$ through discrete quantities ensures that our cross-validation scheme for estimating $\hat{m}$ satisfies oracle inequalities analogous to those presented in \cite{vanderLaan&Dudoit03}.

While the above procedure is written as a discrete cross-validation selector, we note that the super-learner algorithm, which replaces the discrete choice of size $M$ with all convex combinations of $M$ algorithms, can be arbitrarily well approximated by forming an $\epsilon$-net over the $M-1$ simplex, where now each convex combination of these $M$ algorithms is treated as a candidate \citep{vanderLaan&Dudoit&vanderVaart06}.

\begin{algorithm}
\caption{Cross-validated iTMLE for $Q_t$}\label{alg:cvsdr}
\begin{algorithmic}[1]
\Statex \parbox[t]{\dimexpr\linewidth-\algorithmicindent}{Selects between $M_t$ with covariate $\bar{H}_t$, $t\ge t^0$. When $t^0=0$, $M_{t^0}=1$ and the only candidate regression is an intercept-only logistic regression.}
\Procedure{SDR.Q}{$t^0$}
\For{v=1,\ldots,V} \Comment{Estimate treatment mechanisms.}
  \State \parbox[t]{\dimexpr\linewidth-\algorithmicindent-\algorithmicindent}{Using only observations in training set $v$, obtain estimates $\hat{\pi}_{t,v}$ of $\pi_t$, $t=t^0,\ldots,K$, via any desired technique.}
\EndFor
\State Let $\hat{Q}_{K+1}^*\equiv L_{K+1}$
\For{$s=K,\ldots,t^0$} \Comment{Estimate $Q_{s}$}
  \State Initialize $\hat{Q}_{s}^{s+1}\equiv 1/2$
  \For{$t=s,s-1,\ldots,t^0$} \Comment{Target estimate of $Q_{s}$ towards $Q_{t}$.}
    \For{v=1,\ldots,V} \Comment{Fit candidates on training set $v$.}
      \State For each function $\epsilon_{s}^{t} : \bar{\mathcal{H}}_{t}\rightarrow\mathbb{R}$, define $\hat{Q}_{s}^{t+1,\epsilon_{s}^{t},v}$ as
      \begin{align*}
      \bar{h}_{t}\mapsto \Psi\left\{\Psi^{-1}[\hat{Q}_{s}^{t+1,v}(\bar{h}_{s})] + \epsilon_{s}^{t}(\bar{h}_{t})\right\}.
      \end{align*}
      \For{m=1,\ldots,M} \Comment{Fit the $m^{\textnormal{th}}$ candidate}
        \State \parbox[t]{\dimexpr\linewidth-\algorithmicindent-\algorithmicindent-\algorithmicindent-\algorithmicindent-\algorithmicindent}{Using observations $i$ in training set $v$, fit $\hat{\epsilon}_{s}^{t,v,m}$ by running candidate regression $m$ using the cross-entropy loss and the logit link function with outcome $\hat{Q}_{s+1,i}^{*,v}$, offset $\logit \hat{Q}_{s,i}^{t+1,v}$, predictor $\bar{H}_{t,i}$, and weights $A_{t,i}\prod_{r=t+1}^{s} \frac{A_{r,i}}{\hat{\pi}_{r,v,i}}$.
        }
      \EndFor
    \EndFor
    \State \parbox[t]{\dimexpr\linewidth-\algorithmicindent}{Define
    \begin{align*}
    \hat{m}_{s}^{t}&= \argmin_m \sum_{i=1}^n \mathscr{L}_{s}^{t,v_i}(\bar{H}_{t+1,i};\hat{\epsilon}_{s}^{t,v_i,m}),
    \end{align*}
    where $\mathscr{L}_{s}^{t,v}(\bar{h}_{s+1};\epsilon)$ is defined as
    \begin{align*}
    &-a_t\left\{\prod_{r=t+1}^{s} \frac{a_{r}}{\hat{\pi}_{r}^v}\right\}\left\{\hat{Q}_{s+1}^{*,v}\log \hat{Q}_{s}^{t+1,\epsilon,v}  + [1-\hat{Q}_{s+1}^{*,v}]\log\left[1-\hat{Q}_{s}^{t+1,\epsilon,v}\right]\right\}
    \end{align*}
    }
    \State Let $\hat{Q}_{s}^{t,v}\equiv \hat{Q}_{s}^{t+1,\hat{\epsilon}_{s}^{t,v,\hat{m}_{s}^t},v}$, $v=1,\ldots,V$
  \EndFor
  \State Let $\hat{Q}_{s}^{*,v}\equiv \hat{Q}_{s}^{t^0,v}$, $v=1,\ldots,V$ \Comment{Targeted towards all $Q_{t}$, $t<s$}
\EndFor
\State \parbox[t]{\dimexpr\linewidth-\algorithmicindent}{Using observations $i=1,\ldots,n$, let $\hat{Q}_{t^0}^*$ be the output of candidate regression $\hat{m}_{t^0}^{t^0}$ run with outcome $\hat{Q}_{t^0+1,i}^{*,v_i}$, predictor $\bar{H}_{t^0,i}$, no offset, and weights $A_{t^0,i}$.}
\State \textbf{return} $\hat{Q}_{t^0}^*$
\EndProcedure
\end{algorithmic}
\end{algorithm}

\section{Simulated data-generating distributions}\label{app:simDGD}
All simulations are carried out in R programming
package using $\mathtt{simcausal}$ package \citep{Sofryginetal2015}.

\subsection{Simulation 1}\label{app:simDGD.sim1}
The simulation is implemented by sampling longitudinal data over $3$ time-points
and $n=500$ i.i.d. subjects. Briefly, this simulation represents a simple data
structure with binary exposure that can be assigned separately at time-points
$t=1,\ldots,3$ and a binary outcome of interest $Y_{K}$ evaluated at
$K=3$.

Recall from the main text that
$\Psi(x)\equiv1/(1+e^{-x})$. The longitudinal 
structure on each subject was sampled according to the following structural 
equation model for time-point $t=1$:

{\singlespacing\begin{align*}
&L_{t}\sim\Normal(0,1)\\
&A_{t}\sim\Bern\left(\Psi\{L_{t}\}\right)\\
&Y_{t}=0. \\
\intertext{ Followed by time-point $t=2$:}
&L_{t}\sim\Normal(0,1)\\
&A_{t}\sim\Bern\left(\Psi\{L_{t}+A_{t-1}\}\right)\\
&Y_{t}=0. \\
\intertext{Followed by time-point $t=3$:}
&L_{t}\sim\Normal(L_{t-2} A_{t-1}+A_{t-2} L_{t-1}+L_{t-1} A_{t-1},1)\\
&A_{t}\sim\Bern\left(\Psi\{L_{t}+A_{t-1}\}\right)\\
&Y_{t}=\Bern\left(\Psi\{L_{t-1} A_{t}+A_{t-1} L_{t}+L_{t} A_{t}\}\right).
\end{align*}}

\subsection{Simulation 2}\label{app:simDGD.sim2}
The simulation is implemented by sampling longitudinal data over $5$ time-points
and $n=5,000$ i.i.d. subjects. The data-generating distribution for Simulation 2 is more complex and higher dimensional 
than that in Simulation 1. Briefly, for Simulation 2 we let $A_{t}=1$
denote standard of care and $A_{t}=0$ denote the experimental new treatment
at time-point $t$ (note: our coding is the reverse of the more standard
way to denote $A_{t}=1$ as the new treatment at $t$). The subject can
switch from the standard of care at any time during the follow-up $t=1,\ldots,K$. However, once the subject switches he or she is forced to stay on the new treatment until the end of the follow-up.
We assume that the outcome of interest, $Y_{K}$, is a binary indicator
of the adverse event at the final follow-up time-point $K$. Furthermore, switching to the experimental treatment $A_{t}=0$ at any $t$ lowers the probability of the final adverse event at $K$. Finally, the probability of receiving the experimental treatment $A_{t}=0$ 
increases once the subject's risk for experiencing the adverse
end-of-the study event becomes high, where subject's risk at each $t$
is being assessed conditionally on the fact that he or she will remain on the
standard of care. Thus, an incorrectly specified regression of $\pi_{t}$ in such a data-generating process would miss the informative
switching to the experimental treatment, yielding a biased estimate of $Q_{0}$ for IPW-based estimator that requires consistent
$\hat{\pi}_{t}$.

The longitudinal structure on each subject is sampled according
to the following structural equation model at time-point $t=1$:

{\singlespacing\begin{align*}
&U_{L,t}\sim\Normal(0,1)\\
&W_{t}\sim\Normal(0,1)\\
&L_{t}=\left|U_{L,t}\right|\\
&\phR_{t}=L_{t}\\
&A_{t}\sim\Bern\left(\Psi\{L_{t}\}\right)\\
&Y_{t}=0. \\
\intertext{Followed by time-point $t=2$:}
&U_{L,t}\sim\Normal(0,1)\\
&L_{t}=\left|U_{L,t}\right|\\
&\phR_{t}=-2+0.5 L_{t-1}+L_{t}\\
&A_{t}\sim\Bern\left(A_{t-1} \Psi\{1.7-2.0 \Ind_{\left\{\Psi(\phR_{t})>0.9\right\}}\}\right)\\
&Y_{t}\sim\Bern\left(\Psi\{-3+0.5 L_{t-1} A_{t}+0.5 A_{t-1} L_{t}+0.5 L_{t} A_{t}\}\right).\\
\intertext{Followed by time-point $t=3$:}
&U_{L,t}\sim\Normal(A_{t-2} L_{t-1}+L_{t-1} A_{t-1},1)\\
&L_{t}=\left|U_{L,t}\right|\\
&\phR_{t}=-2+0.5 L_{t-1}+L_{t}\\
&A_{t}\sim\Bern\left(A_{t-1} \Psi\{1.7-2.0 \Ind_{\left\{\Psi(\phR_{t})>0.85\right\}}\}\right)\\
&Y_{t}\sim\Bern\left(\Psi\{-3 Y_{t-1}+0.5 L_{t-1} A_{t}+0.5 A_{t-1} L_{t}+0.5 L_{t} A_{t}\}\right).\\
\intertext{Followed by time-point $t=4$:}
&U_{L,t}\sim\Normal(L_{t-2} A_{t-1}+A_{t-2} L_{t-1}+L_{t-1} A_{t-1},1)\\
&L_{t}=\left|U_{L,t}\right|\\
&\phR_{t}=-2+0.5 L_{t-1}+L_{t}\\
&A_{t}\sim\Bern\left(A_{t-1} \Psi\{1.7-2.0 \Ind_{\left\{\Psi(\phR_{t})>0.80\right\}}\}\right)\\
&Y_{t}\sim\Bern\left(\Psi\{-3 Y_{t-1}+0.5 L_{t-1} A_{t}+0.5 A_{t-1} L_{t}+0.5 L_{t} A_{t}\}\right).\\
\intertext{Finally, followed by time-point $t=5$:}
&U_{L,t}\sim\Normal(L_{t-3} A_{t-1}+A_{t-3} L_{t-1}+L_{t-1} A_{t-1},1)\\
&L_{t}=\left|U_{L,t}\right|\\
&\phR_{t}=-1+0.25 L_{t-1}+0.5 L_{t}-0.1 L_{t} L_{t-1}+1.5 W_{0} L_{t-1}\\
&A_{t}\sim\Bern\left(A_{t-1} \Psi\{2-2.0 \Ind_{\left\{\Psi(\phR_{t})>0.80\right\}}\}\right)\\
&Y_{t}\sim\Bern\left(\Psi\{-Y_{t-1}+A_{t}+\phR_{t} A_{t}+0.20 A_{t-1} L_{t}\}\right).
\end{align*}}

\section{Further Algorithms}

\begin{algorithm}
\caption{SDR estimation of $Q_t$ via doubly robust transformations \citep[particular implementation of][]{Rubin&vanderLaan2007}}\label{alg:druntrnas}
\begin{algorithmic}[1]
\Statex \parbox[t]{\dimexpr\linewidth-\algorithmicindent}{This function runs regressions using covariate $\bar{H}_t$ for each $t$, where the regression algorithms may be $t$-dependent.}
\Procedure{SDR.Unbiased.Trans}{}
\State Let $\hat{Q}_{K+1}(\bar{H}_{K+1})\equiv L_{K+1}$.
\State \parbox[t]{\dimexpr\linewidth-\algorithmicindent}{Obtain estimates $\hat{\pi}_t$, $t=1,\ldots,K$, via any desired technique.}
\For{$t=K,\ldots,0$}
\State \parbox[t]{\dimexpr\linewidth-\algorithmicindent}{Using observations $i$ with $A_{t,i}=1$, regress $\Gamma_{t,i}$ against $\bar{H}_{t,i}$, where
\begin{align*}
\widehat{\Gamma}_{t,i}&\equiv \sum_{s=t+1}^K \left(\prod_{r=t+1}^{s} \frac{A_{r,i}}{\hat{\pi}_{r,i}}\right)\left\{\hat{Q}_{s+1,i} - \hat{Q}_{s,i}\right\} + \hat{Q}_{t+1,i}.
\end{align*}
Label the output $\hat{Q}_t$.}
\EndFor
\State \textbf{return} $\{\hat{Q}_t : t=0,\ldots,K\}$
\EndProcedure
\end{algorithmic}
\end{algorithm}

\begin{algorithm}
\caption{DR estimation of $Q_0$ \citep[variant of][]{Bang&Robins05}}\label{alg:bangrobins}
\begin{algorithmic}[1]
\Statex \parbox[t]{\dimexpr\linewidth-\algorithmicindent}{For each $t\ge 1$, $g_t^{\beta_t} : \bar{\mathcal{H}}_t\rightarrow\mathbb{R}$ is a parametric model for $Q_t$ indexed by $\beta_t\in\mathbb{R}^{d_t}$, where $d_t$ is finite. Let $\beta_0=\emptyset$, and $g_0^{\beta_0}\equiv 1/2$.}
\Procedure{DR.Q}{}
\State 	Let $\hat{Q}_{K+1}(\bar{H}_{K+1})\equiv L_{K+1}$.
\State \parbox[t]{\dimexpr\linewidth-\algorithmicindent}{Obtain estimates $\hat{\pi}_t$, $t=1,\ldots,K$, via any desired technique.}
\For{$t=K,\ldots,0$}
\State For $(\beta_t,\epsilon_t)\in \mathbb{R}^{d_t+1}$, define $\hat{Q}_t^{\beta_t,\epsilon_t}$ as
\begin{align*}
\bar{h}_t\mapsto \Psi\left\{g_t^{\beta_t}(\bar{h}_t) + \frac{\epsilon_t}{\prod_{s=1}^{t}\hat{\pi}_{s}(\bar{h}_{s})}\right\}.
\end{align*}
\State \parbox[t]{\dimexpr\linewidth-\algorithmicindent}{Define $(\hat{\beta}_t,\hat{\epsilon}_t)$ as the arguments $(\beta_t,\epsilon_t)$ minimizing
\begin{align*}
-\sum_{i : \prod_{s=1}^t A_{s,i}=1} \Bigg[&\hat{Q}_{t+1,i} \log \hat{Q}_{t,i}^{\beta_t,\epsilon_t} + \{1-\hat{Q}_{t+1,i}\}\log\{1-\hat{Q}_{t,i}^{\beta_t,\epsilon_t}\}\Bigg].
\end{align*}
}
\LineCommentCont{If $g_t^{\beta_t}(\bar{h}_t)=\langle \textnormal{R}(\bar{h}_t),\beta_t\rangle$ for some transformation $\textnormal{R}$, then the above can be optimized in the programming language \texttt{R} by running a linear-logistic regression of the $[0,1]$-bounded $\hat{Q}_{t+1,i}$ against $\textnormal{R}(\bar{H}_{t,i})$ and $1/\prod_{s=1}^{t}\hat{\pi}_{s}(\bar{h}_{s})$ among all individuals with $\prod_{s=1}^t A_{s,i}=1$.}
\State Let $\hat{Q}_t\equiv \hat{Q}_t^{\hat{\beta}_t,\hat{\epsilon}_t}$.
\EndFor
\State \textbf{return} $\{\hat{Q}_t : t=0,\ldots,K\}$
\EndProcedure
\end{algorithmic}
\end{algorithm}

\begin{algorithm}
\caption{iTMLE for $Q_t$}\label{alg:sdr}
\begin{algorithmic}[1]
\Statex \parbox[t]{\dimexpr\linewidth-\algorithmicindent}{This function runs regressions using covariate $\bar{H}_t$ for each $t\ge t^0$. These regressions may be $t$-dependent. In particular, they should be intercept-only logistic regressions if $t=0$.}
\Procedure{SDR.Q}{$t^0$}
\State \parbox[t]{\dimexpr\linewidth-\algorithmicindent}{Obtain estimates $\hat{\pi}_t$, $t=t^0+1,\ldots,K$, via any desired technique.}
\State Let $\hat{Q}_{K+1}^*\equiv L_{K+1}$
\For{$s=K,\ldots,t^0$} \Comment{Estimate $Q_{s}$}
	\State Initialize $\hat{Q}_{s}^{s+1}\equiv 1/2$
	\For{$t=s,s-1,\ldots,t^0$} \Comment{Target estimate of $Q_{s}$ towards $Q_{t}$.}
		\State For each function $\epsilon_{s}^{t} : \bar{\mathcal{H}}_{t}\rightarrow\mathbb{R}$, define $\hat{Q}_{s}^{t+1,\epsilon_{s}^{t}}$ as
		\begin{align*}
		\bar{h}_{s}\mapsto \Psi\left\{\Psi^{-1}[\hat{Q}_{s}^{t+1}(\bar{h}_{s})] + \epsilon_{s}^{t}(\bar{h}_{t})\right\}.
		\end{align*}
		\State \parbox[t]{\dimexpr\linewidth-\algorithmicindent-\algorithmicindent-\algorithmicindent}{Using all observations $i=1,\ldots,n$, fit $\hat{\epsilon}_{s}^{t}$ by running a regression using the cross-entropy loss and the logit link function with:
		\begin{itemize}
			\item Outcome: $\hat{Q}_{s+1}^*$
			\item Offset: $\logit \hat{Q}_{s}^{t+1}$
			\item Predictor: $\bar{H}_{t}$
			\item Weight: $A_{t}\prod_{u=t+1}^{s} \frac{A_{u}}{\hat{\pi}_{u}}$
		\end{itemize}
		}
		\State $\hat{Q}_{s}^{t}\equiv \hat{Q}_{s}^{t+1,\hat{\epsilon}_{s}^{t}}$
	\EndFor
	\State Let $\hat{Q}_{s}^*\equiv \hat{Q}_{s}^{t^0}$ \Comment{Targeted towards all $Q_{t}$, $t<s$}
\EndFor
\State \textbf{return} $\hat{Q}_{t^0}^*$
\EndProcedure
\end{algorithmic}
\end{algorithm}

\begin{figure}[H]
\begin{centering} \includegraphics[width=.75\linewidth]{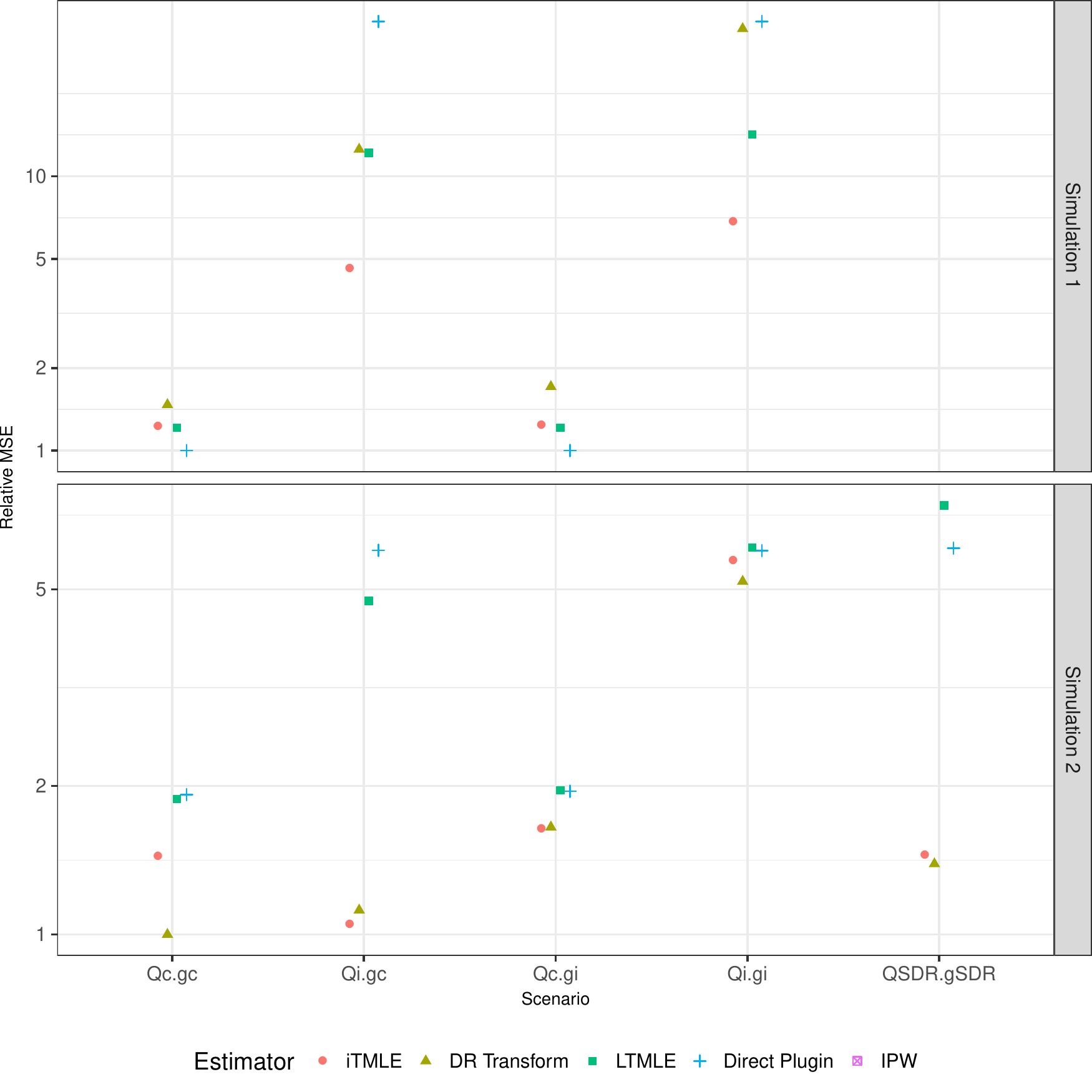}
\par\end{centering}

\caption[Relative MSE for $\hat{Q}_1$ for simulation scenario 1 (top panel) and simulation scenario 2 (bottom panel)]{Relative MSE for $\hat{Q}_1$ for simulation scenario 1 (top panel) and simulation scenario 2 (bottom panel). Simulation 1 is based on longitudinal data with 3 time-points and $n$=500 observations. Simulation 2 is based on longitudinal data with 5 time-points and $n$=5,000 observations. The iTMLE and DR Transform typically outperform or perform comparably to both competitors.}\label{fig:simres.all.MSE} 
\end{figure}

\begin{figure}[H]
\begin{centering} \includegraphics[width=.75\linewidth]{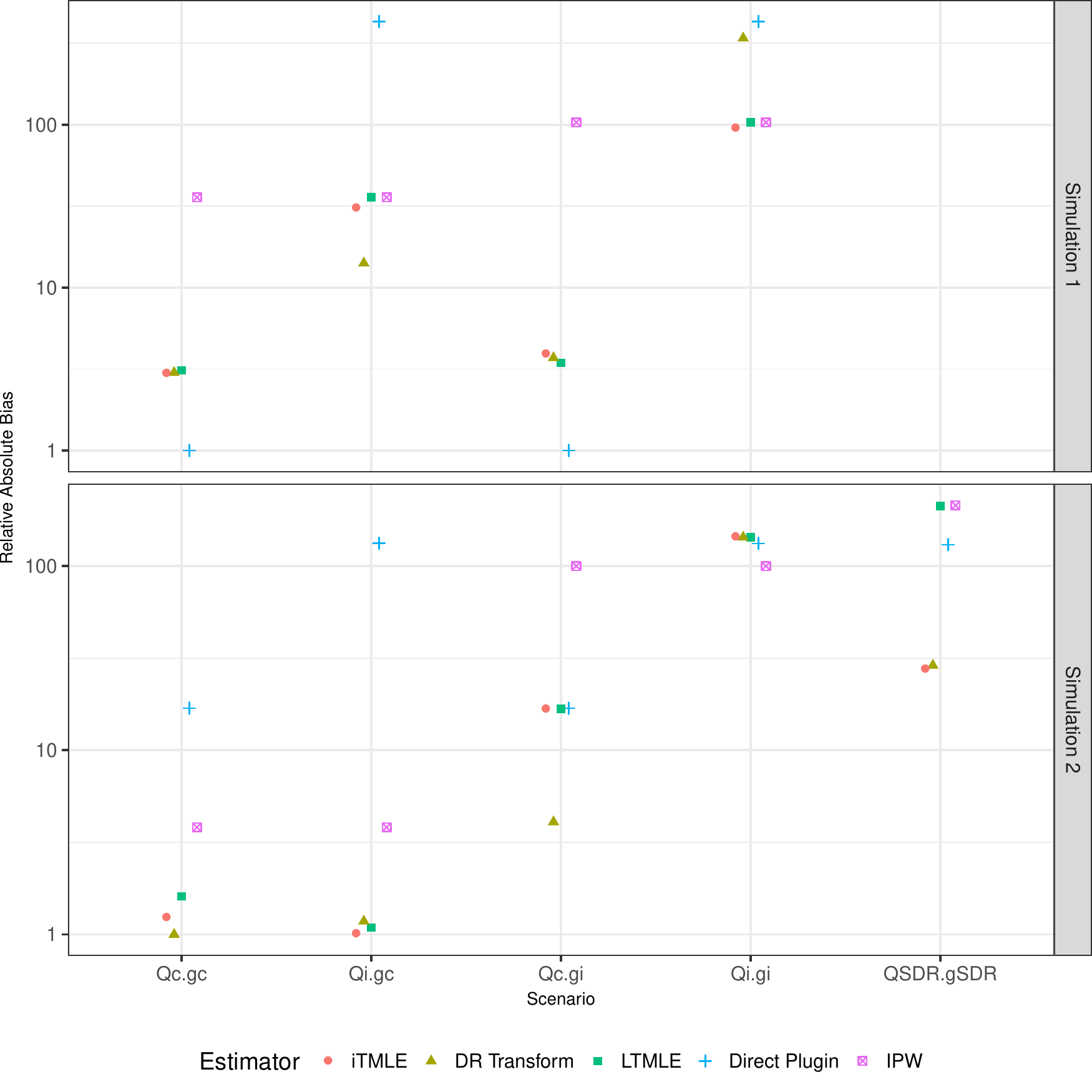}
\par\end{centering}

\caption{Relative absolute bias for $\hat{Q}_0$ for simulation scenario 1 (top panel) and simulation scenario 2 (bottom panel). Simulation 1 is based on longitudinal data with 3 time-points and $n$=500 observations. Simulation 2 is based on longitudinal data with 5 time-points and $n$=5,000 observations. The performance of LTMLE, iTMLE, and DR Transform is similar. The only exception for Simulation 1 is under \textit{Qi.gc}, where DR Transform outperforms other methods. The only exceptions for Simulation 2 are for \textit{Qc.gi}, where DR Transform outperforms other methods, and \textit{QSDR.gSDR}, where both SDR methods outperform LTMLE.}\label{fig:simres.all.BIAS}
\end{figure}

\begin{figure}[H]
\begin{centering} \includegraphics[width=.9\linewidth]{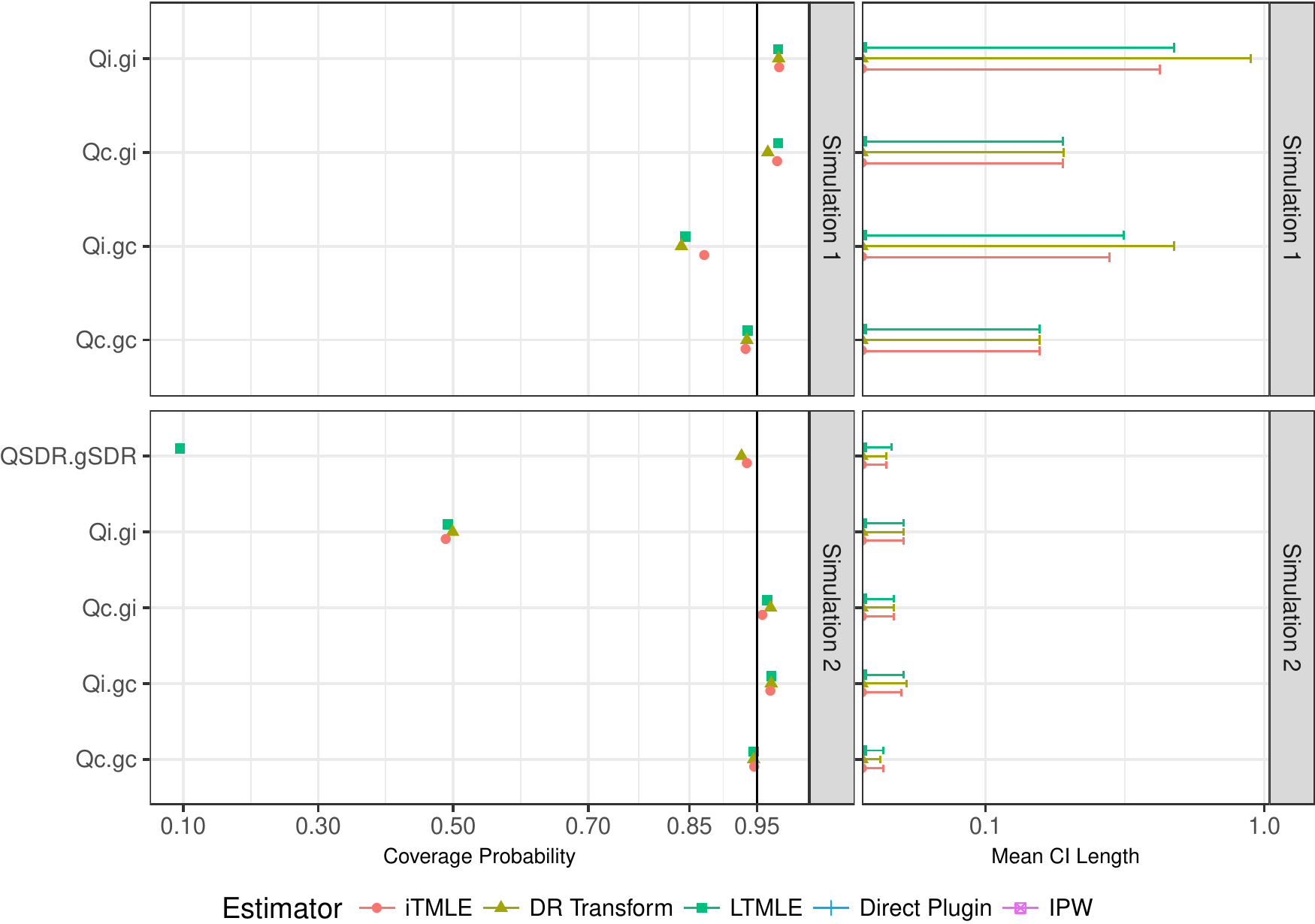}
\par\end{centering}

\caption{Coverage (left panels) and mean length (right panels) of the two-sided 95\% CIs for $Q_0$ in simulation scenario 1 (top panels) and simulation scenario 2 (bottom panels). Confidence interval coverage and width appear to be comparable between the two SDR methods and the LTMLE. The only exception is for the \textit{QSDR.gSDR} scenario, where the LTMLE has roughly 10\% coverage, whereas the SDR  approaches nearly achieve the nominal coverage level.}\label{fig:simres.all.coverCIlen}
\end{figure}

\end{document}